\renewcommand{\baselinestretch}{1.10}
\newtheorem{theorem}{Theorem}
\newtheorem{definition}{Definition}
\newtheorem{lemma}{Lemma}
\newtheorem{remark}{Remark}
\newenvironment{proof}[1][Proof]{\textbf{#1:} }{\ \rule{0.5em}{0.5em}}
\renewcommand{\baselinestretch}{1.1}
\title{\fontsize{14.5pt}{19pt}\selectfont\vspace*{-0.1in} Echo Chambers: \\
		{\fontsize{14.5pt}{10pt}\selectfont Voter-to-Voter Communication and Political Competition\setcounter{footnote}{1}\footnote{This paper is based on a chapter of my PhD dissertation at the University of Warwick. I am grateful to Mirko Draca, Walter Ferrarese, Carlo Perroni, Herakles Polemarchakis, Simone Tonin. I am also grateful to seminar participants at the Erasmus Political Economy Workshop and at the  Workshop for Woman in Political Economy.}
			\vspace*{0.1in}}}
\author{{\fontsize{12.5pt}{10pt}\selectfont%
	Monica A. Giovanniello}$^{\hspace*{0.02in}}$\setcounter{footnote}{3}\footnote{University of the Balearic Islands, Departament d'Economia Aplicada, Palma de Mallorca, 07122, Spain; e-maiul: ma.giovanniello@uib.cat; tel: +34 971 173437}}
\date{}
\begin{document}

	\pagestyle{empty}
	
	
	\renewcommand{\thefootnote}{\fnsymbol{footnote}} \setcounter{footnote}{0}
	
	\renewcommand{\baselinestretch}{1.05}
	\small \normalsize
	
	\maketitle

		\vspace*{-0.1in}
		
	\renewcommand{\baselinestretch}{1.1}
	
	\begin{quote}
		\normalsize 
		
		\begin{center}
			{\normalsize {\small Abstract}}
		\end{center}
		
		\begin{quote}
			{\normalsize {\small \ \ \ \ 
		I study how strategic communication among voters shapes both political outcomes and parties' advertising strategies in a model of informative campaign advertising. Two main results are derived. First, echo chambers arise endogenously. Surprisingly, small ideological distance between voters is not sufficient to guarantee that a chamber is created, biases' direction 
		plays a crucial role.
		Second, when voters' network entails a significant waste of information, parties target their advertising only to the voters leaning toward their opponent's rather than to their own supporters. }}
			
			\bigskip
			\smallskip
			\smallskip
			\noindent
			{\normalsize {\small {\sc Keywords}}:\ \ {\small 
				Echo Chambers; Social Networks; Information Aggregation; Political Advertising. }\smallskip }
			
			\noindent
			{\normalsize {\small {\sc JEL CLASSIFICATION}}: D72, D83, M37}
			
		\end{quote}
		
		\medskip
		
	\end{quote}
	
	\setcounter{page}{0} 
	
	\thispagestyle{empty}

	\pagestyle{plain} 
	
	\renewcommand{\baselinestretch}{1.32} 
	\small \normalsize
	
	\renewcommand{\thefootnote}{\arabic{footnote}}
	\setcounter{footnote}{0}

	\section{Introduction}
    
    One of the main characteristics of the new information society is how it horizontalizes information: the users of information are not just passive recipients, as was the case for traditional media (radio, television, newspaper), but are now able to spread information themselves, and in this way shape public opinion. 
	By mediating the exposure to information, these social media technologies have opened up an unexpected role for voters -- they now not only cast their votes but also have a greatly enhanced role in spreading information that can subsequently shape public opinion.\footnote{Donald Trump's main channel of communication was social media. Brad Parscale, digital director of Trump's campaign, stated that Facebook helped generate \$250 million in on-line fund-raising: ``Facebook and Twitter were the reason we won this thing''. Another example is the growth of the \textit{Five Star} movement in Italy, which emerged from a blog by the comedian Beppe Grillo. The \textit{Five Star} movement organizes face-to-face interactions via websites such as \textit{MeetUp.com} and solicits supporter views 
	on its central blog. Similarly, the UK's \textit{English Defense League} (EDL) emerged directly from a \text{Facebook} group centred on local race-related political developments in Luton. See \cite{scho2016} for a summary of these and other similar political movements.} 
    
    Earlier studies focusing on the consumption of ``top-down'' news ({\em i.e.} print newspapers, TV, on-line news websites) found evidence of information segregation in on-line news consumption (\citealp{gentzkow_ideological_2011}).\footnote{Previous works focused on the media bias in traditional news formats, such as high circulation newspapers (\citealp{mullainathan_market_2005}; and \citealp{gentzkow_media_2006}). A notable empirical exploration of the determinants of newspaper media bias is \cite{gentzkow_what_2010}.} 
    
    Nowadays this paradigm has changed: voters are not simply exposed to information; instead they interact and exchange information with deliberately selected sources and receivers. \cite{bakshy_exposure_2015} find that individuals' sharing choices plays a stronger role than the Facebook content recommendation algorithm in limiting exposure to cross-cutting content, voters network reinforces patterns of segregation. 
    Thus, any political information that originates from traditional outlets is mediated by social media, which creates scope for strategic information sharing, hiding, or misrepresenting.\footnote{\emph{''We just put information into the bloodstream to the internet and then watch it grow, give it a little push every now and again over time to watch it take shape. And so this stuff infiltrates the on-line community and expands ...,''} Mark Turnbull, managing director of Cambridge Analytica and SCL Elections (The  Guardian, E. Graham-Harrison and C. Cadwalladr, 21/03/2018).}  
    
    This paper provides a first theoretical investigation of how social network interactions and network structure influence the diffusion of politically-relevant information, and thus political outcomes.\footnote{Several empirical studies show that interactions within intimate social networks increases the likelihood of political participation (\citealp{mcclurg_social_2003}; \citealp{plutzer_identity_1996}; \citealp{rainie_social_2012}), and study the effect of social media on voters' exposure to political information (\citealp{bakshy_exposure_2015}; \citealp{BW2017}; \citealp{DiFonzo2017}; \citealp{garrett09echo}).}
    I develop a two-level game that embeds: \emph{(i)} a model of political competition where parties compete in campaign advertising, and \emph{(ii)} a model of personal influence in a peer-to-peer information network where voters can strategically communicate with each other in order to affect the policy outcome. 
    I focus on political competition between two policy-motivated parties. Each party selects, given its candidate type (policy position), the level of informative advertising in order to persuade voters. Parties' advertisement is costly, and it is assumed to be truthfully informative.\footnote{Parties invest in campaign to create advertisement able to capture the attention of voters and with which voter can emphasize. While voters communication is 'cheap' as they can tweet or share messages whose content has been already created.}
    Voters have preferences over policies, but do not know ex-ante the ideological position of the parties' candidates. Voters can obtain information about parties' candidate types either by direct exposure to the parties' advertisements or by strategically communicating with other voters within their network. Thus, voters update their beliefs and cast their vote sincerely. The party that obtains a simple majority of votes wins the election and enacts the policy corresponding to its candidate ideology.
    
    Two results are put forward. First, echo chambers arise endogenously.
    When voters can costlessly and strategically communicate with their friends they chose to share valuable information only with like-minded peers and do not share any valuable information otherwise. As result, information travels only in ideologically-aligned groups, this is what I mean by echo chambers. 
    I show that echo chambers arise if and only if voters are leaning toward the same party ({\em i.e.} voters are biased in the same direction), and their ideological distance is not too big. In other words, I show that small ideological distance between voters is not sufficient to guarantee that a chamber is created, biases' direction plays a crucial role. Thus, voters do not share valuable information neither with ideologically close voters who are biased toward the opposite party, nor with voters biased toward their favoured party if the ideological distance is not small enough. 
    
    Second, I analyse the effect of echo chambers, via social network's structure, on parties advertising strategies. 
    Echo chambers, i.e. an enclosed system in which information is credibly shared, can act as a hurdle to the spread truthful information among voters and restrict access to sources of reliable information. If voters network is characterized by few links and/or the probability that like-minded voters interact is low the network acts as a filter, parties advertisement is not echoed among voters and valuable information is lost in the communication stage. When this is the case parties maximize their probability of winning by targeting those voters who, if uninformed, would vote for their opponent rather than targeting their own supporters (whom if uninformed would cast their vote for them). If, instead, the voters' network is characterized by many links and/or by an high probability that like-minded voters interact the echo chambers act as information diffusion device. Parties maximize their probability of winning by randomly advertising their candidates, in this way they reach voters biased toward both parties, who will echo the advertisement within their network. 
    
    Two recent contributions examine the implications of information segregation in networks. \cite{bloch2018rumors} study under which conditions rational agents have incentive to spread rumours in social networks. They consider a model in which bias and unbiased agents can spread possibly false information to affect the voting outcome. They find that, for high priors beliefs about the desired state of the world, the social network acts like a filter: agents block messages received by their biased neighbours, in this way, the circulating messages, even if incorrect, may convey useful information to take the right collective decision. \cite{jann2016echo}, analyse a cheap talk game with multiple senders and receivers, and find that players self-select themselves into like-minded and homogeneous groups. Interestingly they find that, when players have different preferences and information, players self-segregation is Pareto improving. 
    
    However, differently from the current paper, neither study the mechanism behind the rise of an echo chamber, as I do here. I consider a situation in which voters with similar and dissimilar preferences freely communicate with each other and given the underlying social network structure, information travels from many-to-many without any stigma. I show that echo chambers are a product of voters communication strategies rather than environments predetermined by exogenous factors. 
    To the best of my knowledge, this paper is the first theoretical contribution that formalizes how network structure affect endogenous echo chambers formation, and their effect on both the information diffusion and parties advertisement strategies.  
    
    This paper is also related to the theoretical literature on communication flows (\citealp{crawford_strategic_1982}). A well-established result in this literature is that truthful information transmission is possible whenever the sender and the receiver have similar preferences, {\em i.e.} the bias gap is not too large (\citealp{austen_smith90}). 
    In contrast, I show that, although it is necessary that information is transmitted among voters with similar preferences, this is not sufficient to create an echo chamber, voters must also be biased toward the same direction (same party).
    
    Another related contribution is \cite{galeottiMatt_personal_2011} who investigate the effects of social learning on political outcomes in a model of truthful interpersonal communication and informative campaign advertising. 
    They find that non-strategic communication networks with a richer structure lead to political polarization. In contrast with their model, I focus on the effect of strategic communication transmission among voters on the political outcome. I find that, when voters can falsify or hide information to shape public opinion and affect the outcome of the election, information diffusion and aggregation depend on the structure of the communication network which, in turn, affect the size of echo chambers that can be exploited by parties. This different information environment highlights the importance of the voters' network structure and its effect on the political outcome.
    
    The remainder of the paper is organized as follows. Section 2 describes the model. Section 3 develops the benchmark model in which the only source of voters information is the parties' advertisement. 
    Section 4 characterizes the equilibrium of the communication game. Section 5 provides the characterization of the political equilibrium. In Section 6 the main assumptions of the paper are discussed. The last section concludes. All proofs are in Appendix.

    \section{The Model}

    	\noindent \textit{Parties.} 
    The policy space is one-dimensional. There are two policy-motivated parties, $ J \in \{ L, R \} $, that compete in an election. Each party is represented by a candidate who can be either a moderate ($m$) or an extremist ($e$).
    A candidate type, $t_J \in T = \{ e, m\} $, where $ 0 < m < 1/2$ and $e = m/2 $, is independently and exogenously assigned to each party.\footnote{The assumption $e = m/2 $ is made only to simplify computation exposition, as letting $e < m $ does not change the quality of my results.}$^,$\footnote{I relax this assumption later in Section 6.1 allowing the parties to strategically select also the candidate type.} With probability $\sigma_J$ the candidate of party $J$ is a moderate and with probability $1- \sigma_J$ he is an extremist. The probability distributions $ \sigma_J : T \rightarrow  [0,1] $, for $ J \in \{ L, R \} $, are common knowledge. A state of the world is, then, defined by the profile of candidate types for the two parties, $ \theta = (t_{L}, t_{R}) \in T \times T $.
    
    Abusing of notation, I denote by $t_L $ the ideology of the candidate of party $L$ and by $ 1 - t_R$ the ideology of party $R$'s candidate. 
    Parties have distance preferences over the policy space and their ideological bliss point coincide with the ideological bliss point of their extremist candidate, \emph{i.e.} party $L$ bliss point is $i_L = e$ and $R$'s bliss point is $i_R = 1- e$. 
    
	Parties privately observe their candidate type and simultaneously choose a level of informative campaign advertising, $ x_{J}(t_J) \in [0,1] $.\footnote{The level of informative advertisement, $ x_{J}(t_J)$, can be seen as the ability of the party's message to capture the attention of voters. The advertisements have the following two characteristics: parties can only advertise their own candidate, which implies there is no negative advertising; parties cannot lie about the ideological position of their candidates.} The advertising is truthful, but costly. 
	If party $J$ advertises its candidate with intensity $ x_{J}(t_{J}) $, each voter obtains perfect information about party $J$'s candidate type with probability $ x_{J}(t_{J}) $. 
	The party that wins the electoral competition enacts the policy preferred by its candidate. 
	
	Then, party $L$ chooses to advertise its candidate $ t_{L} $ with intensity $ x_{L}(t_{L}):  T \rightarrow [0, 1] $ to maximize its expected payoff:
	\begin{eqnarray}\label{Uparty}
    U_{L}(x \mid t_{L})&=& \sigma_{R}(m) [\pi_{L} ( x \mid \theta ) ( i_L - t_L )+(1- \pi_{L}(x))(i_L -(1-m))] \notag \\
    &+& \sigma_{R}(e) [\pi_{L} (x \mid \theta ) (i_L - t_L)+(1- \pi_{L}(x))  (i_L -(1-e))] - c x_{L}(t_L) \notag 
    \end{eqnarray}
	where $\sigma_{R}(t_R)$ is the probability that party $R$ has a candidate of type $ t_R $, $ \pi_{L} ( x \mid \theta ) $ is the probability that party $ L $ wins the election, given the state of the world $\theta$, with $ x = [ x_{L}(t_{L}), x_{R}(t_{R})] $ being the advertising levels chosen by the parties, 
	and $ c > 0$ is the advertising marginal cost.
	Substituting $i_L$ into (\ref{Uparty}), after some manipulation the  expected payoff of party $L$ can be expressed in a more parsimonious way:
	\begin{equation}\label{UpartL}
    U_{L}(x \mid t_{L}) = \sum_{ t_{R} \in \lbrace e , m \rbrace } \sigma_{R}(t_{R})[\pi_{L} ( x \mid \theta ) (1-t_{R}-t_{L})+ (e - (1- t_{R}))]- c x_{L}(t_{L}),
    \end{equation}
    and party $R$'s expected payoff is
    \begin{equation}\label{UpartR}
    U_{R}(x \mid t_{R}) =  \sum_{ t_{L} \in \lbrace e , m \rbrace } \sigma_{L}(t_{L})[\pi_{R} ( x \mid \theta  ) (1-t_{R} -t_{L}) + ( t_{L} - (1 - e))]- c x_{R}(t_{R}),
    \end{equation}	
    where $1- e$ is the ideological bliss point of party $R$.
    Notice that the utilities of the two parties are increasing in the ideological distance between their candidates, as $-\pi_J(t_J \mid \theta)(t_L - (1-t_R))>0$, and decreasing in the ideological distance between the opponent candidate and the party's bliss point, as $e - (1- t_{R})<0$ (resp. $t_{L} - (1 - e)<0$) for party $L$ ($R$). This implies that a party always prefer to lose against a moderate candidate and win with an extremist one.
     
    \vspace{0.3 cm}
    
    \noindent \textit{Voters.} There is a continuum of voters of unit measure which have distance preferences over candidates' positions, 
    \begin{equation}\label{Uvoter}
    u_{i}(t)= 
    \begin{cases}
    \quad - \mid i - t_{L} \mid $ \quad \quad \quad if  $  L  $ wins the election, $
    \\
    - \mid i - ( 1 - t_{R} )  \mid $  \quad\,\ if  $  R  $ wins the election, $
    \end{cases}
    \end{equation}
 
    where $ i \in (0,1) $ denotes a voter's ideological bliss point. There are two types of voters, partisans and independent. 
    To distinguish among different types of voter I introduce the following definition: 
    \begin{definition}\label{D1}
    Denote by $j= \{R, L \}$ the favourite party of voter $i$. The preferences of a partisan voter over the candidates type of the two parties are $ e_j \succ_i  m_j \succ_i m_{j^{-}}  \succ_i e_{j^{-}} $ with $ j^{-} \neq j $, i.e. she always votes for her favourite party regardless the candidates type of the two parties.
   The preferences of an independent voter over the candidate types of the two parties are $ m_j \succ_i m_{j^{-}} \succ_i  e_j \succ_i e_{j^{-}}$, i.e. she  always prefer to vote for the moderate candidate of the party she likes less over an extremist candidate of her favourite party.
    \end{definition}
    Using this voters' type definition and their preferences it is easy to show that ex-ante (before the advertising takes place) the ideological bliss points of the partisans' group are distributed on the intervals $ ( 0, 1/2 - m/4 )$ and $ ( 1/2 + m/4 , 1) $ for the left and right partisans respectively. While independents' voters bliss point are distributed on the interval $ [ 1/2 - m/4, 1/2 + m / 4 ] $. 
    To avoid trivial results I assume the two partisans groups have the same size and that the identity of the median independent voter is ex ante uncertain. The independent voters' ideological bliss points are uniformly distributed in the interval $ [ \mu - \tau , \mu + \tau] $, where $ \mu $ is drawn from a uniform distribution with support $ [ 1/2 - m/4, 1/2 + m /4 ] $, $\tau >0$, and $m < 1/4 - \tau/2$.

\begin{figure}[h]
    \includegraphics[scale=0.38]{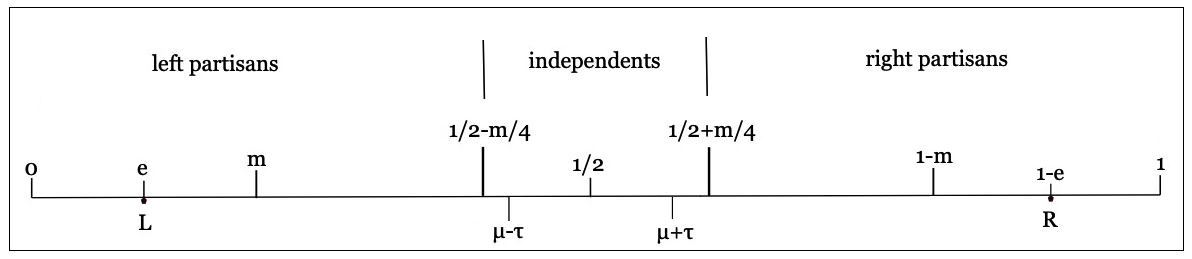}
    \caption{Parties, candidates, and voters ideologies}
\end{figure}

    \vspace{0.3 cm}

    \noindent \textit{Information Diffusion.} 
    Voters can obtain information about the candidates type either by direct exposure to the parties' advertisement or by strategically communicating with other voters. 
    
    Parties, after observing their candidate type, select the level of informative advertising, $x_J(t_{J})$. Then, each each voter is directly exposed to the information about $t_J$ with probability, $x_J(t_{J})$. 
    
    Each voter is exogenously matched with $z+k>0$ voters, where $k \geq 1$ is the number of her senders
    and $ z > 0 $ is the number of her receivers. 
    I assume that there is no preference uncertainty within the personal network of each voter, \emph{i.e.} each voter can perfectly observe the ideological bliss point of all her $z+k$ peers, but voters can not observe their peers' networks nor their voting decisions.\footnote{It is, in general, possible to distinguish two types of online networks: those based on common interests, e.g. Twitter, and those based on off-line relations, e.g. Facebook (Pew Research Center, Survey March 7 -- April 4 2016, ``Social Media Update''). In both type of networks voters have a good approximation of the ideological position of their peers. Although, as expected in the second type of networks the level of homophily within users' networks is reduced, this does not prevent information diffusion. Many empirical studies find networks based on intimate social relations are likely to stimulate interest in politics (\citealp{mac_carron_calling_2016}; \citealp{goncalves_modeling_2011}).}
    Thus, once the advertisement has taken place, all voters simultaneously receive $ k $ and send $z$ pairs of private messages about the candidate type of each party $( M_{L}(t_{L}), M_{R}(t_{R}) )$, 
    where $  M_{J}(t_{J}) \in \{e, m, \emptyset \} $ is a message about the candidate type $t_J$ of party $ J \in \{L , R\} $.\footnote{By assuming simultaneous communication among voters I take into account the life-span of new information on social media. As pointed out by \cite{hodas2012visibility}, on social media new pieces of information ''compete in time'', as the recommendation algorithms moves them to less visible queues, ''and space'', as users divide their limited attention among their peers.
    They show that in order to maximize the public users-voters tend to retweet information immediately after it has been tweeted for the first time, {\em i.e.} when it is most visible. In fact, given the limited attentions of the their peers users-voters visibility is crucial and so it is the timing of publication/retweet.} 
    The sender in each of the $k$ matched pair aims to maximize the probability that his favourite candidate wins the election. In other words, when sending the message pair to each of her receiver, the sender behaves as if she believes the receiver's vote would be the pivotal one. Then, the sender's intra-stage utility from each possible pair of messages is 
    \begin{eqnarray}\label{1sender}
    	u_{s}(M_{L}(t_{L}),M_{R}(t_{R}) \mid I^{s}, x, r ) &\hspace*{-0.2in}= &\hspace*{-0.1in}\Pr \left( v_{r} = 1 \mid I^{r}, x, r \right)  \mathbb{E} [ t_{L} - s \mid  I^{s}, x ]  \notag \\
    	&\ \ \ +&\hspace*{-0.1in} \Pr \left( v_{r} = 0 \mid I^{r}, x , r  \right)  \mathbb{E} [ s - (1 - t_{R}) \mid I^{s},  x], 
    \end{eqnarray}
    where $s$ denotes the bliss point of the sender in each matched pair, $ I^{s}$ is the information set of the sender and $ \Pr \left( v_{r} = 1  \mid I^{r},  x, r \right)  $ is the probability that the receiver with ideological bliss point $r$ votes for the party $L$, given her ex-post information set $ I^{r} $.

     \begin{figure}[h]
     \centering
     \includegraphics[scale=0.9]{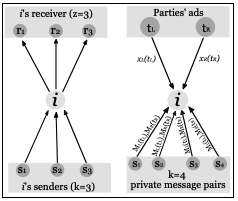}
     \caption{Left: Voter $i$ personal network. Right: Voter $i$ information sources}
     \end{figure}
        
    After the communication stage, the information set of voter $i$ becomes $ I^{i} =  $ $ (t_{L}, $ $ \overline{ M}_{L}(t_L),$  $ t_{R}, \overline{M}_{R}(t_R)) $, where $ t_{J} $ denotes the information about the candidate type of party $J$ acquired through direct exposure to the parties' advertisement, and $ \overline{M}_{J}(t_{J}) $ denotes the vector of messages received from the $k$ partners about the candidate type of party $J$. Let $ \rho(t_{L},t_{R} \mid I^{i}, x, k) $ denote the posterior belief of voter $ i $ about the state of the world, $ \theta = (t_{L},t_{R} ) $, where voters' prior beliefs about the true state of the world are consistent with $\sigma$. The posterior belief, whenever possible, is derived using Bayes' rule:
    \begin{eqnarray}\label{belief}
    \rho(t_{L},t_{R} \mid I^{i}, x, k)  & = & \frac{\Pr(t_{L},t_{R}) \Pr(I^{i} \mid t_{L},t_{R}, x, k)}{\sum_{t'_{L},t'_{R} \in T} \Pr(t'_{L},t'_{R}) \Pr(I^{i} \mid t'_{L},t'_{R}, x, k)  }.
    \end{eqnarray}	
   
    \vspace{0.3 cm}
    
    \noindent \textit{The indifferent independent voter.} After updating their beliefs, voters cast their vote sincerely. 
 	Voter $i$'s utility function is 
   \begin{small}
    \begin{equation}\label{Uvot}
 	    u_i(v_i \mid \theta, I, k) =\sum_{t_{L} \in T} \sum_{t_{L} \in T} \Big(  v_i \rho ( t_{L}, t_{R} \mid \theta, I, k ) (t_L - i) + (1-v_i)   \rho ( t_{L}, t_{R} \mid \theta, I, k) (i -(1-t_R)) \Big)
 	\end{equation}
    \end{small}
    where $v_i \in \{0,1\}$ is $i$'s voting decision and $v_i =1 $ (resp. $v_i =0 $) denotes the decision of voting for party $L$ ($R$).
    Note that voter's utility does not depend on her message strategy.\footnote{\label{gratific} Studies show that citizens share political news mostly motivated by the gratification of information seeking or for socializing rather than to political mobilize other voters (\cite{lee2012news}, \cite{kaye2004web}). 
    The communication intra-stage utility can be taken into account by adding to the voters utility function a gratification payoff $ \Phi w_i $, where $w_i \in\{0,1\}$ and $w_i = 1$ denotes the voter's decision of sending $z$ private message pairs that maximize the probability of obtaining his favoured outcome, and $\Phi>0$ is the resulting gratification. 
    } 
    Through strategic communication the partisan voters aim to maximize the probability that their party's candidate wins the election, while the independent voters aim to maximize the probability that a moderate candidate (preferably of their favoured party) wins the election. 
 	 
    Thus, given an information set $ I $, the voter votes for party $L$ if and only if 
 	\begin{equation}\label{votingdec}
 	u_i(v_i=1 \mid \theta, I, k) \geq u_i (v_i =0 \mid \theta, I, k). 
 	\end{equation}
    It follows that a voter with ideology $i$ and information set $ I $ votes for party $L$ if and only if $i < i^{*}(I)$, where $ i^{*}(I) $ solves the equation (\ref{votingdec}) with equality, \emph{i.e.} it is the bliss point of the indifferent independent voter with information set $I$. 
    The identity of the indifferent voter for any information set $ I $, is given by\footnote{See Appendix for derivation.}
    \begin{equation}\label{indifferent}
    i^{*}(I) =  \frac{1}{2} + \frac{1}{2} \displaystyle  \sum_{t_{L}, t_{R} \in T} \left(  \rho ( t_{L}, t_{R} \mid  \theta, I, k ) \, t_{L} - \rho ( t_{L},t_{R} \mid  \theta, I,  k ) \, t_{R} \right).
    \end{equation}
    Party $L$'s vote share is 
    \begin{displaymath}
    \mu^{*}( x \mid \theta, x, k) = \sum_{I} i^{*}(I)  \Pr( I \mid \theta, x,  k),
    \end{displaymath}
    then party $L$'s candidate wins the election with the following probability:
    \begin{equation}
    \pi_{L}( x \mid \theta, k ) = 
    \begin{cases}
    0 \hspace*{4 cm} \quad $ if $ \quad  \mu^{*}(x \mid \theta, k) < \frac{1}{2} - m
    \\
    \dfrac{\mu^{*}( x \mid \theta,  k) + m - \frac{1}{2}}{2m} \qquad \, $ if $ \quad   \mu^{*}(x \mid \theta, k ) \in \Big( \frac{1}{2} - m, \frac{1}{2} + m \Big)
    \\
    1  \hspace*{4 cm} \quad $ if $ \quad  \mu^{*}(x \mid \theta, k ) > \frac{1}{2} + m.
    \end{cases}
    \end{equation}

 	\vspace{0.3 cm}
 	
 \noindent \textit{Timing.}
    I study a symmetric five-stage Bayesian game. In the first stage parties, given their candidate type, select a level of informative campaign advertising, $x_J(t_J)$. In the second stage each voter observe the candidate type of each party $J$ with probability $x_J(t_J)$, and she is matched with $k+z>0$ other voters. In the third stage (communication stage) voters may be exposed to political information throughout strategic communication: each voter receives a pair of private messages, $(M_L(t_L), M_R(t_R))$, from her $k$ peers about the candidate type of the two parties.
    In the fourth stage, voters update their belief and sincerely cast their vote. Finally, the candidate that obtains a simple majority of votes wins the election and enacts the policy corresponding to his ideology.
	
\begin{figure}[h]
    \centering
    \includegraphics[scale=0.9]{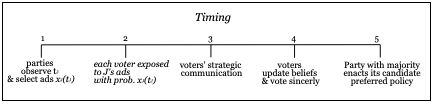}
    \caption{Timing of the game}
\end{figure}

    \medskip
    I analyse symmetric political equilibria, that is I assume the two parties have the same probability of competing in the election with a moderate candidate, $ \sigma_L = \sigma_R = \sigma$.

    \begin{definition}
    	\ A symmetric political equilibrium consists of: (i) symmetric parties' advertising strategies $ x^{*} = (x^{*}_{L}(t_{L}) ,x^{*}_{R}(t_{R}) ) $ ; (ii) a pair of message strategy for each sender in each matched pair $ M_{J}(t_{J}): T \rightarrow M $, for $ J \in \{ L, R \} $; (iii) voters' belief functions $ \rho(t_{L},t_{R} \mid I^{i}, x, k) $, (iv) indifferent independent voters type $ i^{*}(\cdot)$ such that:
    	\begin{itemize}
    		\item[1.] $ (x^{*}_{L}(t_{L}) ,x^{*}_{R}(t_{R}) ) $ are mutual best responses given the subsequent communication equilibrium and voting behaviour;
    		\item[2.]$  \rho^{*}(\cdot) $ is consistent and sequentially rational with the probability of the true state of the world $\sigma$, the optimal advertising strategies of the parties $ x^{*} $, and the senders' optimal message strategies $(M_L(t_L), M_R(t_R))$.
    	\end{itemize}
    \end{definition}

     \section{Benchmark Model}
    
    In this section I analyse parties advertisement strategies when 
    voters can not communicate with each other, that is $k =0$. 
    Thus, voters' information set is reduced to $I_0 = (t_L, t_R)$, which may contain only the information gathered from parties advertisement.\footnote{A seminal contribution on informative advertisement is \cite{grossman1984informative}. They analyse a model in which consumers choose among products of which they have seen the advertisement. As firms use advertising to create surplus they over-advertise, this way reducing profits. The main difference between the Grossman and Shapiro's model and my model is that voters may vote also for candidate they have not seen the advertisement. In other words, seeing and not seeing advertisement carries information as voters form expectations about them at the time of voting. This implies that Grossman and Shapiro's firms and the parties in this paper have different incentives. The firms over-advertise (even hurting their own profits) as it is the only way to create surplus, while the parties in this model obtains votes even when voters are uninformed.}
    
    Before proceeding with the analysis it is worth notice that
    a party that advertises an extremist incurs two costs: a direct cost of advertisement $c > 0 $, and an indirect cost due to disclosing harmful information. By advertising an extremist the party lowers the probability of winning the election as the independent voters, who are decisive for the outcome of the election, always prefer to vote for a moderate candidate rather than extremist.
    
    \begin{lemma}\label{lemcandidate}
    	Parties never advertise an extremist candidate.
    \end{lemma}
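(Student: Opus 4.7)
My plan is to prove the lemma by contradiction: suppose some symmetric equilibrium has $x_J^*(e) = \alpha > 0$ for party $J \in \{L,R\}$ with an extremist candidate, and construct a deviation to $x_J(e) = 0$ that strictly raises $U_J(\cdot \mid e)$. The key enabling observation is that voters do not observe parties' strategy choices, only the realized ads; hence their posteriors $\rho(\cdot)$ and the induced thresholds $i^*(\cdot)$ in (\ref{indifferent}) are pinned down by the candidate equilibrium strategies and are invariant to a unilateral private deviation.

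First I would inspect the payoff (\ref{UpartL})--(\ref{UpartR}) at $t_J = e$: the only terms depending on $x_J(e)$ are the direct cost $-c\,x_J(e)$ and the winning probability $\pi_J$, whose coefficient $(1 - t_{-J} - e)$ is strictly positive for every $t_{-J}\in\{e,m\}$. It thus suffices to show that cutting $x_J(e)$ from $\alpha$ to $0$ weakly increases $\pi_J$ in every state while strictly reducing cost. For this, I would compare voter thresholds across the two information sets that change under the deviation. With $k=0$, voters' information is $(t_L,t_R)\in\{e,m,\emptyset\}^2$, and Bayes' rule gives $\rho(m_J\mid\emptyset_J) = \frac{\sigma(1-x_J^*(m))}{\sigma(1-x_J^*(m)) + (1-\sigma)(1-\alpha)} > 0$ whenever $x_J^*(m)<1$. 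Consequently
\[
E[t_J\mid \emptyset_J,\,\cdot\,] \;>\; e \;=\; E[t_J\mid e_J,\,\cdot\,],
\]
so by (\ref{indifferent}) the indifferent-voter threshold $i^*$ is strictly larger under $\emptyset_J$ than under $e_J$, i.e.\ more voters vote for $J$.

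Under the deviation, every voter receives $\emptyset_J$ whereas on path a fraction $\alpha$ received $e_J$ and $1-\alpha$ received $\emptyset_J$. Since thresholds are fixed off-path, the vote share $\mu^*$ for $J$ in state $(e,t_{-J})$ is weakly higher under the deviation, hence $\pi_J$ is weakly higher as well. Together with the strict cost saving $c\alpha>0$, this contradicts the equilibrium hypothesis and proves $x_J^*(e)=0$.

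The main obstacle will be handling the boundary cases cleanly. If $x_J^*(m)=1$, then $\emptyset_J$ is a perfect signal of extremism, the thresholds coincide, and the strict improvement comes solely from the cost-saving channel. Similarly, in regions where $\mu^*$ falls outside $(1/2-m,\,1/2+m)$ and $\pi_J$ is flat, only the cost saving is strict; in the interior region the threshold comparison delivers a strict increase in $\pi_J$ as well. I would also note briefly that the argument is symmetric in $J$ and uniform across states $t_{-J}$, so the conclusion applies to both parties and does not rely on any particular realisation of the random median $\mu$.
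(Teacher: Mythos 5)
Your proposal is correct and follows essentially the same route as the paper's own (sketch) proof: both compare the indifferent-voter cutoff when a voter sees the extremist ad against the cutoff when she sees nothing, conclude that advertising an extremist weakly lowers the vote share while strictly incurring cost, and hence that $x_J^*(e)=0$. Your version is more careful about equilibrium beliefs being invariant to a unilateral deviation and about boundary cases (note only that your sentence ``the threshold $i^*$ is strictly larger under $\emptyset_J$'' is stated for $J=L$; for $J=R$ the inequality reverses but, as you say, the conclusion is symmetric).
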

    \begin{proof}
    	See Appendix.
    \end{proof}

    Thus, voters' information about the candidate of each party $J$ gathered from advertisement can be summarized with $t_J = \{ m , \emptyset  \}  ,\ \forall J \in \{L , R\} $, and a voter is said to be perfectly informed about the candidate type of party $J$ whenever he observes advertisement $t_J = m $. An voter uninformed about the candidate type of party $J$, {\em i.e.} a voter who does not receive the advertisement, knows that with probability $\sigma (1 - x_{J}(m))/ 1- \sigma x_J(m)$ the $J$'s candidate is a moderate and with probability $ 1 - \sigma (1 - x_{J}(m))/ 1- \sigma x_J(m)$ he is an extremist.

    Working backwards, I first analyze the voting decisions. Once the advertisement has taken place the possible information sets hold by the voters are $\{(m,m); (m,\emptyset);$ $ (\emptyset,m); (\emptyset,\emptyset)\}$. 
   
   First, consider the case in which a voter holds an information set is $I_0^i= (\emptyset, m )$. The voter casts her votes for party $L$ whenever the equation (\ref{votingdec}) holds, i.e. whenever
    \begin{eqnarray}
     \rho ( m, m \mid I_0^i, x ) (m - i) + \rho( e, m \mid I_0^i, x ) \left( e - i \right) \geq i- (1-m), \notag
    \end{eqnarray}
   where $\rho ( e, m \mid I_0^i, x ) = 1 - \rho ( m, m \mid I_0^i, x )$ with $\rho ( m, m \mid I_0^i, x ) =  \frac{\sigma (1-x_L(m))}{ 1- \sigma  x_L(m) }$, and $e =m/2$. Which implies that a voter with information set $I_0^i= (\emptyset, m )$ votes for party $L$ if and only if her bliss point is 
    $$ i \leq \alpha^0_l \equiv \frac{1}{2} - \frac{m}{4} \left( 1 - \frac{\sigma (1-x_L(m))}{ 1- \sigma  x_L(m) } \right). $$
    Thus, a voter informed only about the candidate of party $R$ casts her votes for party $L$ if and only if her bliss point is $ i \in (0, \alpha^0_l ) $. By the symmetry of the game it follows that a voter informed only about the candidate of party $L$ ($I_0^i = (m, \emptyset)$) casts her votes for the party $R$ if and only if her bliss point is $ i \in ( \alpha^0_r , 1) $, where $\alpha^0_r  \equiv \frac{1}{2} + \frac{m}{4}  \rho ( m, e \mid I_0^i, x )$.
    By following the same steps above it is easy to show that, whenever a voter hold information set $I_0^i= (\emptyset,\emptyset)$ or $I_0^i= (m,m)$, she casts her vote for party $L$ if and only if she is $L$-leaning, \emph{i.e.} $i \leq 1/2$, and votes for party $R$ otherwise.
    
    Using Definition 1 it is easy to distinguish also the ex-post groups-types of voters, \emph{i.e.} whether, after the advertising, the voter with bliss point $i$ is a partisan or an independent. Once the parties advertise their candidates, if a voter's ideological bliss point is such that either $i \in (0, \alpha^0_l)$ or $i \in (\alpha^0_r  , 1)$ she is ex-post partisan, as her preferences are $ e_j \succ_i  m_j \succ_i m_{j^{-}}  \succ_i e_{j^{-}} $.
    If, instead, ex-post the voter's bliss point is $i\in ( \alpha^0_l ,\alpha^0_r)$ she is ex-post independent, as her preferences are $   m_j \succ_i m_{j^{-}}  \succ_i e_j  \succ_i e_{j^{-}} $.
    By symmetry, the interval defined by the two cut-off points always contains the centre of the ideological space (i.e. $\alpha^0_l < 1/2 < \alpha^0_r $), voters with ideological bliss point on the same side of the center point are said to be biased toward the same party.
    
       \begin{remark}\label{votypePostAdv}
    	There exist two ideological cut-off points, $\alpha^0_l$ and $\alpha^0_r$ which together with the central point define the ex-post group-type to which each voter belongs once the advertising has taken place:
    	\begin{itemize}
    	    \item if $i \in (0, \alpha^0_l) $ (resp. $i \in ( \alpha^0_r, 1 ) $) the voter is an ex-post $L$-leaning ($R$-leaning) partisan;
    	    \item if $i \in ( \alpha^0_l, 1/2) $ (resp. $i \in ( 1/2 , \alpha^0_r ) $) the voter is an ex-post $L$-leaning ($R$-leaning) independent.
    	\end{itemize}
    	
       \end{remark}
    
    From Lemma \ref{lemcandidate} follows that parties may advertise their candidate only if they run for the election with a moderate one. Given the symmetry of the game, it is sufficient to focus only to the case in which party $L$ has a moderate candidate. 
    
    In what follows I consider the case in which parties have access to two advertising technologies: i) a random technology, which allow each voter to be exposed at the advertisement with the same positive probability, ii) and group targeting technology which direct the advertisement exclusively to the voters who are leaning toward a given party (either only $L$-leaning or only $R$-leaning voters). 
    
    The benchmark model captures the minimum cost under which parties are willing to advertise their candidates when the two technologies are available, but voters can not communicate among each other.

    \paragraph{Random Advertising.} 
    First I analyse the case in which party $L$ can only randomly advertise its candidate. 
    
    Suppose that party $R$ has an extremist candidate. Whenever this is the case what matters is the distribution of indifferent voters, which in turn depends on the information set held. Given that party $L$ has a moderate candidate and it randomly advertises him with intensity $x_L(m)$, the proportion of uninformed voters is $1- x_L(m)$. While no voter receive any information from party $R$ as $x_R(e) = 0$.  
    Then party $L$ votes share, given any possible information set $I_0$, is
    \begin{eqnarray}
     \mu^0( x \mid \theta) & = & \sum_{I_{0}} i^{*}(I_0)  \Pr( I_0 \mid \theta, x)  = \frac{1}{2} (1- x_L(m)) + \left(  \frac{1}{2} + \frac{m}{4} \rho( m, e \mid I, x) \right)  x_L(m) \nonumber\\
     & = & \frac{1}{2} + \frac{m}{4} (1- \rho(m, m \mid I, x))  \, x_L(m), \nonumber \notag
    \end{eqnarray}
    which implies that party $L$ wins the election with probability
 	\begin{displaymath}
 	\pi^0(m,e \mid x) = \frac{1}{2} + \frac{1}{8} \,(1- \rho(m, m \mid I, x)) \, x_{L} .
 	\end{displaymath}
    If both parties have a moderate candidate and both randomly advertise them each party wins the election with probability $\pi(m,m \mid x) = 1/2$.
    Thus, party $L$'s expected utility from randomly advertising a moderate candidate, when $k=0$, is 
 	\begin{small}
 	\begin{eqnarray*}
 		U^0_{L}(x_{L} \mid \theta) & = &  \sigma \Big[\frac{1}{2}(1-2 m)- \frac{2-3m}{2} \Big] \\
 		& + & (1-\sigma) \Big[ 4 +  \frac{2-3m}{16} \rho(m, e \mid I^0, x)  x_{L}(m)  - (1-m) \Big] - c x_{L}(m).
 	\end{eqnarray*}
 	\end{small} 
 	which is linear in $x_L(m)$. Then, it follows that party $L$ advertises its candidate if and only if $ U^0_{L}(x_{L} \mid m, t_R) > U^0_{L}(x_{L}=0 \mid m, t_R)$, 
 	where 
 	\begin{small}
 	\begin{eqnarray*}
 	U^0_{L}(x_{L}=0 \mid m, t_R) &= & \sigma \Big[\pi^0_R(m,m \mid x_L=0, x_R) (1-2 m)- \frac{2-3m}{2} \Big] \\
 		& + & (1-\sigma) \Big[ \frac{1}{2} \left(1- \frac{3}{2} m \right) -(1-m) \Big] - c x_{L}(m)
	\end{eqnarray*}
 	\end{small} 
 \noindent	Thus, whenever the cost of advertisement is such that 
    \begin{equation}\label{0cost}
 	c \leq c^0 \equiv \frac{1}{16} (1- \sigma) (2- 3 m)(1- \rho(m,m \mid I^0, x))
    \end{equation}
 	party $L$ advertise with intensity $\bar{x}_{L}^{0} (m)= 1$, and never advertise a moderate candidate otherwise.

 	This implies that, if $c < c^0 $, a voter uniformed about the candidate of party $J$ would anticipate that the party is running with an extremist candidate. Thus, if both parties have a moderate or an extremist candidate, each wins the election with probability $\pi^0 (m,m \mid x) = \pi^0 (e, e \mid x)=1/2$, while if only one party has a moderate candidate it wins the election with probability $\pi^0 (m,e \mid x)= \pi^0 (e, m \mid x) = 1$.\footnote{This result depends on the specific functional form of the cost function. Suppose the cost function is quadratic, $c(x) = c \frac{x^2}{2}$. Then parties advertise a moderate candidate if and only if $c \leq c^{00} \equiv \frac{1}{8} \left( (2-3m)(1-\sigma) \rho (m, e \mid I^0, x) + 2 (1-2 m)(1-\rho (m, e \mid I^0, x)) \sigma \right)$ and $\bar{\bar{x}}^{0}_J(m) = \frac{(2-3m)(1-\sigma) \rho (m, e \mid I^0, x)}{16 c } $. From which follow that uninformed voters can not anticipate with certainty the candidate type when $c < c^{00}$.  }
 	
    \paragraph{Targeted Advertising.} Suppose parties can target their advertisement to only one ideological group of voters, { \em i.e. }  either to the $L$-leaning or to the $R$-leaning, and all voters that belong to the chosen group receive the advertisement with probability one, while all the voters that belong to the other group do not receive it. Thus, each voter that belongs to the targeted group $j$ will receive the advertisement with probability $x_{Jj}^{d}(t_J) = 1$, where $d $ denotes that party $J$ is using a targeting technology which directs the information exclusively to the $j$-leaning voters.
    
    In what follows I consider only the case in which $c > c^0$, as whenever $c \leq c^0$ party advertise their candidate to the entire population, $\bar{x}^0(m) =1$, which clearly dominates any targeting strategy.
    
    Whenever $c > c^0$ parties prefer to do not randomly advertise their candidate. To simplify the argument, for the moment suppose that only party $L$ can deviate by targeting its advertisement. If party $L$ targets $L$-leaning voters (its own supporters): all left leaning voters are informed about the candidate's type of their favoured party, and $L$ wins the election with probability $ 1/2 $, regardless of the candidate type of party $R$. By not advertising party $L$ obtains the same probability of winning the election but at a zero cost, it follows that party $L$ never targets its advertisement to its own supporters.
    
    If, instead party $L$ targets the voters leaning toward its opponent (henceforth opponent's supporters) it wins the electoral competition with probability $ \pi_{L}(m, x_R(t_R) \mid x^{\tau}  ) = 1 $, as I assumed that party $R$ has no access to targeting technology for now and $x_R(t_R) = 0$ for all $t_R$ when $c > c^0$.
    Then, party $L$ target its advertisement to the opponent supporters, $x_{JJ^{-}}^{\tau}(t_J) = 1$, whenever its utility from targeting its advertisement, $U_L^{\tau^0} (x_L^{\tau}(m), x_R(t_R) \mid \theta) $, is higher than the utility it receives when it does not advertise its candidate, $U^0_{L}(x_{L}(m)=0, x_R(t_R) \mid m, t_R)$. Thus, party $L$ targets its advertisement to the opponent supporters whenever 
    \begin{equation}
      c  \leq  \frac{(2-3m)(1-\sigma)}{4}.
    \end{equation}
    By symmetry it follows that also party $R$ has an incentive  target its advertisement to $L$-leaning voters. 
    This implies that, if a party runs with a moderate, has access to targeting technologies, and $ c  \leq  \frac{(2-3m)(1-\sigma)}{4}$, it wins the election with  with probability $1/2$ if both run with a moderate candidate, and with probability one if the opponent has an extremist. Then, it is easy to show that if 
     \begin{equation}
     c < c^{\tau} \equiv \frac{(2-3m - m \sigma)}{4},
     \end{equation}
    where $ c^{\tau} >  c^0$, parties advertise a moderate candidate by targeting their advertisement to voters leaning for their opponent. 
    
    	\begin{theorem}\label{benchRandom}
    In a symmetric pure strategy equilibrium without voters communication, parties advertise a moderate candidate with intensity $x^0(m)^{*}= 1$ if and only if the unit cost of advertisement $c$ is such that
 	\begin{equation*}
 		c \leq c^0 \equiv \frac{1}{16} (1- \sigma) (2- 3 m) (1-\rho(m,m \mid I^0, x)),
 		\end{equation*}
 	target their advertisement to the opponent supporters, $x_{JJ^{-}}^{\tau}(t_J) = 1$ , if and only if 
 	 \begin{equation*}
      c <  c^{\tau} \equiv \frac{(2-3m - m \sigma)}{4},
    \end{equation*}
    where $ c^0  < c < c^{\tau} $, and never advertise a moderate candidate otherwise.
 	\end{theorem}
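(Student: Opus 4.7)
The plan is to formalize the case analysis already sketched in the paragraphs preceding the statement, reducing the equilibrium characterization to a finite comparison of corner payoffs. By Lemma \ref{lemcandidate}, attention can be restricted to strategies of the form $x_J(m) \in [0,1]$ with $x_J(e) = 0$, and the symmetry assumption $\sigma_L = \sigma_R = \sigma$ allows me to work with party $L$'s best response only. For each ex-post information set $I_0 \in \{(m,m), (m,\emptyset), (\emptyset,m), (\emptyset,\emptyset)\}$, Bayes' rule (equation (\ref{belief})) and the indifference condition (equation (\ref{indifferent})) yield the cut-off voters $\alpha_l^0, \alpha_r^0$ derived in the text; aggregating with the probabilities $\Pr(I_0 \mid \theta, x)$ gives the vote share $\mu^0(x \mid \theta)$ and hence the winning probability $\pi_L^0(\theta \mid x)$ needed to evaluate (\ref{UpartL}).

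For the random advertising branch, substituting the winning probabilities into (\ref{UpartL}) produces an expected payoff $U^0_L(x_L \mid m)$ that is affine in $x_L(m)$. The optimum therefore lies at a corner, and comparing $x_L(m)=1$ with $x_L(m)=0$ isolates precisely the threshold $c^0$ stated in the theorem; uniqueness on each side of $c^0$ follows from strict monotonicity of the affine payoff in the unit cost.

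For the targeting branch I would restrict to $c > c^0$, so that random advertising is strictly dominated by not advertising, and treat the two targeting directions separately. Targeting own supporters only informs voters in $(0, \alpha_l^0)$ who already vote for $L$ regardless of the information they receive; the winning probability remains $1/2$, so this deviation is strictly dominated by paying nothing. Targeting opponent supporters instead delivers $L$ a winning probability of one whenever $R$ is extremist and $1/2$ whenever $R$ also targets symmetrically; computing the expected gain relative to no advertising and solving for indifference yields the cutoff $c^\tau$. Verifying $c^0 < c^\tau$ is an arithmetic check, and mutual best response under symmetry follows by repeating the calculation with party $R$'s role inverted.

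The main obstacle is ruling out intermediate deviations, specifically mixtures of random and targeted advertising, or partial-intensity targeting toward either group. Linearity of the winning probability in each intensity parameter, combined with the fact that voters already committed to a party contribute nothing at the margin, lets me represent any such mixture as a convex combination of the four pure profiles considered above (no advertising, full-intensity random, full-intensity own-supporter targeting, full-intensity opponent-supporter targeting). Corner optimization then reduces the analysis to the pairwise comparisons that define $c^0$ and $c^\tau$, partitioning the cost space as claimed in the statement.
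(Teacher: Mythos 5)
Your proposal is correct and follows essentially the same route as the paper, whose proof of this theorem is the in-text derivation of Section 3 rather than an appendix argument: corner optimization of the affine payoff in $x_L(m)$ yields $c^0$, a dominance argument rules out own-supporter targeting, and the unilateral-deviation-then-symmetry comparison for opponent targeting yields $c^\tau$. Your closing paragraph on ruling out mixtures of the two technologies goes beyond what the paper does, but it is a harmless additional check given that the model treats the technology choice as discrete.
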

    
    When voter's communication does not takes place independent voters may cast their vote for their less favoured party if and only if they are hit by its advertisement. Then, in equilibrium, if the cost of the advertisement is low, parties advertise with intensity $x^*(m)=1$ a moderate candidate. In other words, parties inform the entire population of their candidate type if the advertising technology is efficient enough. If, instead, the cost of advertisement is high parties prefer to target their advertisement to the opponents' supporter and let their own supporter uninformed, this way to guarantee the victory of their moderate whenever it runs against an extremist and win with probability 1/2 otherwise.\footnote{Policy motivated parties always prefer to lose against a moderate and win against an extremist.}

    \section{Echo Chambers}
    
    In this section I introduce the communication game assuming that each voter is linked to $k \geq 1$ senders and $z\geq 1 $ receivers. In the communication game, then,  information travels from many to many, given the underlying social network structure, and I show that echo chambers arise as a product of voters' communication strategies.  
    
    Once the advertisement has taken place, each voter receives $k \geq 1$ (resp. sends $z\geq 1$) private messages about the candidate's type of both parties ($ M_L (t_L), M_R (t_R) $) from each of her $k $ senders (resp. to her $z$ receivers). 
    
    Working backwards I start analysing the voters decision for each possible information set, in this way I can characterize the ex-post communication groups (henceforth simply ex-post groups) of partisan and independent voters. 
    
    In the last stage voter $i$'s decision is 	
    \begin{equation}
    v_{i} =
    \begin{cases}
    \ 1 $\quad\quad if $ \mathbb{E} \left[t_{L}- i \mid I^{i}, x, k \right] \geq \mathbb{E} \left[ i -(1-t_{R}) \mid I^{i}, x, k \right],
    \\
    \ 0 $\quad\quad otherwise, $
    \end{cases}
    \end{equation}
    where $x$ is the equilibrium advertising strategy vector of the two parties, and $ I^{i} = (t_L , M_L(t_L) , t_R , M_R(t_R) ) $ voter's $i$ information set. 
    
    For each voter's personal network $k$, subsequent communication strategies, and advertising strategies, voters can be divided into ex-post  partisans and ex-post  independents. 
    
    \begin{lemma}\label{alphak}
       	There exist two ideological cut-off points, 	$$\alpha_l =  \dfrac{1}{2} - \dfrac{m}{4} \left[ 1 - \rho (m,m \mid I, x, k)\right] \quad \text{and} \quad \alpha_r =  \dfrac{1}{2} + \dfrac{m}{4} \left[ 1 - \rho (m,m \mid I, x, k)\right], $$ 
       	which together with the central point define the ex-post group-type to which each voter belongs once both the advertising  has taken place and each voter has received the $k$ message pairs:
    	\begin{itemize}
    	    \item if $i \in (0, \alpha_l) $ (resp. $i \in ( \alpha_r, 1 ) $) the voter is an ex-post left-leaning (right-leaning) partisan;
    	    \item if $i \in ( \alpha_l, 1/2) $ (resp. $i \in ( 1/2 , \alpha_r ) $) the voter is an ex-post left-leaning (right-leaning) independent.
    	\end{itemize}
    
    \end{lemma}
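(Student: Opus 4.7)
The plan is to mirror the derivation that underlies Remark \ref{votypePostAdv}, the only change being that the posterior $\rho(m,m\mid I,x,k)$ now incorporates the informational content of the $k$ received message pairs on top of direct exposure to advertising. Both cut-offs follow from the indifferent-voter identity (\ref{indifferent}) applied to the realizations that pin down the partisan--independent frontier.

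To obtain $\alpha_{l}$ I would apply (\ref{indifferent}) to an information set $I$ under which the voter has concluded $t_{R}=m$ (from any combination of direct advertising and peer messages) while $t_{L}$ remains unresolved. For such $I$ the posterior places mass only on $\{(m,m),(e,m)\}$, with $\rho(m,m\mid I,x,k)+\rho(e,m\mid I,x,k)=1$, so all summands in (\ref{indifferent}) with $t_{R}=e$ vanish and, using $e=m/2$,
\[
i^{*}(I)\;=\;\tfrac{1}{2}+\tfrac{1}{2}\,\rho(e,m\mid I,x,k)\,(e-m)\;=\;\tfrac{1}{2}-\tfrac{m}{4}\bigl[1-\rho(m,m\mid I,x,k)\bigr]\;=\;\alpha_{l}.
\]
The symmetric specialization---an $I$ under which $t_{L}=m$ is concluded while $t_{R}$ stays uncertain---delivers $\alpha_{r}=\tfrac{1}{2}+\tfrac{m}{4}[1-\rho(m,m\mid I,x,k)]$. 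Whenever residual uncertainty survives we have $\rho(m,m\mid I,x,k)<1$, so $\alpha_{l}<1/2<\alpha_{r}$ and the three points partition $(0,1)$ into the four intervals announced.

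Reading off the group-type labels is then immediate by comparison with Definition \ref{D1}. A voter with $i<\alpha_{l}$ is inframarginal for $L$ at the left-adverse realization and \emph{a fortiori} at every realization that is weakly more favourable to $L$, so her induced ranking over $\{e_{L},m_{L},m_{R},e_{R}\}$ is $e_{L}\succ_{i}m_{L}\succ_{i}m_{R}\succ_{i}e_{R}$, which is the partisan ordering. For $i\in(\alpha_{l},1/2)$ the voter instead defects to $R$ at the left-adverse realization but reverts to $L$ at the right-adverse one, producing the independent ranking $m_{L}\succ_{i}m_{R}\succ_{i}e_{L}\succ_{i}e_{R}$. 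The two right-hand intervals are handled by the $i\leftrightarrow 1-i$, $L\leftrightarrow R$ symmetry of the setup.

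The delicate step is the monotonicity claim in the previous paragraph: no realization $I'$ should produce an indifferent voter below $\alpha_{l}$. I would settle this by rewriting (\ref{indifferent}) as $i^{*}(I)=\tfrac{1}{2}+\tfrac{m}{4}[\rho(m,e\mid I,x,k)-\rho(e,m\mid I,x,k)]$. The lower envelope is governed by $\rho(e,m\mid I,x,k)$, which---among realizations that leave $t_{L}$ genuinely uncertain---is maximized exactly when $t_{R}=m$ has been resolved and $\rho(m,e\mid I,x,k)=0$; that is precisely the $I$ used to compute $\alpha_{l}$. Realizations at which peer messages together with advertising pin down $t_{L}=e$ push the problem back to the fully-informed case treated in Definition \ref{D1}, where the partisan boundary $1/2-m/4$ lies weakly to the left of $\alpha_{l}$ and so never contradicts the announced partition.
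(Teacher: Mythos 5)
Your proposal is correct and follows essentially the same route as the paper: both obtain $\alpha_l$ (resp.\ $\alpha_r$) as the indifference point of a voter who has learned only that the opposing party's candidate is moderate, by specializing the indifferent-voter identity and then invoking symmetry and Definition~\ref{D1} to label the four intervals. Your explicit lower-envelope step merely makes precise a monotonicity claim the paper leaves implicit, and your aside about realizations revealing $t_L=e$ is vacuous, since Lemma~\ref{lemcandidate} together with the restricted message space means no such realization occurs.
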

    \begin{proof}
    	See Appendix.
    \end{proof} 
    
    Whenever voters' ideological bliss point $i$ is such that  $i \in (\alpha_l , \alpha_r)$, voters are ex-post independents; otherwise they are ex-post partisans (see Appendix). More specifically, if a voter's ideological bliss point $i$ is such that:
 $i \in (0 , \alpha_l)$ (\emph{resp.} $i \in (\alpha_r, 1)$ ) the voter is an ex-post $L$-leaning (\emph{resp.} $R$-leaning) partisan;
$i \in ( \alpha_l, \alpha_r)$ the voter is an ex-post independent voter, and she is said to be $L$-leaning (\emph{resp.} $R$-leaning) whenever $i \in (\alpha_l, 1/2) $ (\emph{resp.} $i \in (1/2, \alpha_r) $.
    

	Before proceeding with the analysis it is worth noticing that from Lemma \ref{lemcandidate} follows that none of the voters can credibly convey a message $M_J(e)$, given that voters have zero probability of receiving information $t_J = e$ from parties advertising. Then, it must that in equilibrium the sender is indifferent between $M_J(e)$ and $M_J(\emptyset)$ as the two messages can convey the same information. Thus, in what follows, without loss of generality, I assume that a sender indifferent between two pairs of message always prefers to send a truthful one, and this allows me to restrict the message's space to $ M_J (t_J) = \{  m ,  \emptyset \} ,\ \forall J \in \{L , R\}  $. 
    
    In the next sections I first derive the properties of a pairwise communication equilibrium (Section 3.1), and then I characterizes the echo chambers equilibrium (Section 3.2).

    \subsection{Pairwise Communication}
        
    To derive the main result of the paper it is useful to analyse the basic network structure to highlight the role played by the direction of the ideological biases in the communication game.
    I assume that each voter is matched with only one sender (\emph{i.e.}, $k=1$), and, abusing of notation, I denote by $ s $ and $ r $ the ideological bliss points of the sender and the receiver, respectively.
    
    In the last stage the receiver's voting decision depends on her information $I^r$, that is on the information she may receive from each party advertisement ($t_L,  t_R$) and/or from the sender $(M_L(t_L), M_R(t_R))$, the parties' advertising strategy $x = \left( x_L , x_R\right) $, and  the ideological bliss point of her sender $s$. Thus, the receiver votes for party $L$ if and only if 
    \begin{equation*}
	\mathbb{E} \left[t_{L}- r \mid  I^{r}, x, s \right] \geq \mathbb{E} \left[ r -(1-t_{R}) \mid  I^{r}, x, s \right].
    \end{equation*}
    
    In turn, this implies that the informational incentive of the sender depends not only on the information he holds, but also on both the receiver's information and receiver's ideological bliss point. Recall that the sender's intra-stage utility from each possible pair of messages is 
    \begin{eqnarray}\label{sender}
    	u_{s}(M_{L}(t_{L}),M_{R}(t_{R}) \mid I^{s}, x, r ) &\hspace*{-0.2in}= &\hspace*{-0.1in}\Pr \left( v_{r} = 1 \mid I^{r}, x, r \right)  \mathbb{E} [ t_{L} - s \mid  I^{s}, x ]  \notag \\
    	&\ \ \ +&\hspace*{-0.1in} \Pr \left( v_{r} = 0 \mid I^{r}, x , r  \right)  \mathbb{E} [ s - (1 - t_{R}) \mid I^{s},  x].
    \end{eqnarray}
    
    Assuming that an indifferent sender sends a truthful messages, truthful revelation is incentive-compatible for the sender if and only if
    \begin{equation}\label{IC}
    u_{s} (M_{L}(t_{L}),M_{R}(t_{R}) \mid \theta, I^{s}, x, r ) \ \geq\ u_{s} (M_{L}(\hat{t}_{L}),M_{R}(\hat{t}_{R})\mid \theta, I^{s},  x, r  ).
    \end{equation}
    
    I state the following result.
	\begin{lemma}\label{paircheaptalk}		
    	In the pairwise cheap-talk game truthful revelation is possible if and only if 
    	\begin{itemize}
    		\item[(i)] the receiver is an ex-post partisan;
    		\item[(ii)] the $J$-leaning sender knows his favoured party runs with a moderate, i.e. $t_J= m$, and is uninformed about the candidate type of the party she likes less ($t_{J^{-}} = \emptyset$);
    		\item[(iii)] both the sender and the receiver belong to the same ex-post group type (partisan/independent) and are on the same side of the ideological center point.
    	\end{itemize}
    \end{lemma}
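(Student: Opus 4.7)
The plan is to analyse the sender's incentive-compatibility condition (\ref{IC}) by splitting on the ex-post type of the receiver identified by Lemma \ref{alphak}. The sender's utility (\ref{sender}) depends on the message pair only through the receiver's voting probability, equivalently through the induced indifferent-voter cut-off $i^{*}(I^{r})$ of (\ref{indifferent}). An $L$-leaning sender strictly prefers messages that raise $i^{*}(I^{r})$, while an $R$-leaning sender strictly prefers those that lower it.

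Suppose first that the receiver is ex-post partisan. By Lemma \ref{alphak} her vote is pinned down by $r$ alone and is invariant to any message, so (\ref{sender}) is constant in $(M_{L}(t_{L}), M_{R}(t_{R}))$. The tiebreaking convention adopted just before the lemma then makes the truthful pair IC, giving clause (i).

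Next, suppose the receiver is ex-post independent. Using Bayes' rule (\ref{belief}) together with (\ref{indifferent}), $i^{*}(I^{r})$ is strictly increasing in the receiver's posterior mean of $t_{L}$ and strictly decreasing in her posterior mean of $t_{R}$; since Lemma \ref{lemcandidate} restricts $t_{J}\in\{m,\emptyset\}$, the direction in which each coordinate of $(M_{L},M_{R})$ shifts those posteriors is unambiguous. I can therefore check profitable deviations from the truthful pair coordinate by coordinate. For a $J$-leaning sender, announcing $M_{J}=m$ points in her preferred direction if and only if she genuinely observed $t_{J}=m$, and announcing $M_{J^{-}}=\emptyset$ points in her preferred direction if and only if she did not observe $t_{J^{-}}=m$ (since truthfully disclosing the opponent's moderate would push $i^{*}$ against her). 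This identifies the information type in clause (ii) as the only sender type compatible with truth-telling. Moreover, when the receiver is in the sender's ex-post group and on the same side of the centre point, the direction the receiver would move her vote upon learning $t_{J}=m$ aligns with the sender's bias, which gives clause (iii); an opposite-side receiver would create an incentive to misreport.

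The main obstacle is the combinatorics of the deviations and closing the \emph{only if} direction. The coordinate-wise monotonicity of $i^{*}$ in the receiver's posterior means lets me decompose any two-coordinate deviation into two single-coordinate lies, so non-IC in any state is witnessed by a profitable one-coordinate deviation. This simultaneously delivers sufficiency in each case above and rules out truthful revelation in every excluded configuration.
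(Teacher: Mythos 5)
Your case split on the receiver's ex-post type and the observation that clause (i) is immediate (a partisan receiver's vote is message-invariant, so the tie-breaking convention gives truth-telling) match the paper. The treatment of the ex-post independent receiver, however, rests on a premise that is false in this model and on an unjustified reduction, so the argument does not go through.

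First, the claim that ``an $L$-leaning sender strictly prefers messages that raise $i^{*}(I^{r})$'' is wrong for exactly the senders that matter. The sender's payoff (\ref{sender}) weights $\Pr(v_r=1)$ and $\Pr(v_r=0)$ by $\mathbb{E}[t_L - s \mid I^{s},x]$ and $\mathbb{E}[s-(1-t_R)\mid I^{s},x]$, and for an ex-post \emph{independent} sender the sign of the comparison between these two expectations flips with her information set: a left-leaning ex-post independent who knows $t_R=m$ and nothing about $t_L$ would herself vote for $R$, hence wants to \emph{lower} $i^{*}(I^{r})$, not raise it. Which direction the sender wants to push the receiver is determined jointly by her bias and her posterior over candidate types, and this is precisely the mechanism that generates clauses (ii) and (iii) and the interior cut-offs $\overline{s}_l,\overline{s}_r$ in the paper's computation. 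Collapsing the sender's objective to ``push the cut-off toward my side'' erases the content of the lemma; it also leads you to the incorrect conclusion that the information type in (ii) is ``the only sender type compatible with truth-telling,'' whereas (i)--(iii) are alternative sufficient configurations (e.g.\ a perfectly informed sender with $I^{s}=(m,m)$ also reveals truthfully to any same-side receiver).

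Second, the decomposition of an arbitrary two-coordinate deviation into two profitable single-coordinate lies is not justified and is doubtful here: the receiver's vote is not separable across the two message coordinates (a left-leaning ex-post independent switches to $R$ only in the joint configuration ``informed that $R$ is moderate \emph{and} uninformed about $L$''), so a joint deviation can be profitable while neither one-coordinate deviation is. The paper avoids this entirely by enumerating the sender's expected payoff for every message pair under each of the sender information sets $(m,m)$, $(\emptyset,m)$, $(\emptyset,\emptyset)$ and comparing them directly; that enumeration is also what delivers the explicit thresholds needed for the ``only if'' direction, which your argument never produces. To repair the proof you would need to (a) first pin down the trusting receiver's voting rule for each $I^{r}$ (the paper's Part 1), and (b) carry out the payoff comparisons message pair by message pair rather than coordinate by coordinate.
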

    \begin{proof}
    	See Appendix.
    \end{proof} 
    
    The intuition behind point (i) of Lemma \ref{paircheaptalk} is trivial. 
    The sender always sends a truthful message to an ex-post partisan receiver (\emph{i.e.}, either $r < \alpha_l $ or  $r > \alpha_r $) as she always vote for her favoured party regardless the information held. 
    
    In what follows I briefly analyse the incentive of the sender in the communication game when matched with an ex-post independent receiver.
   
    First consider case in which the sender is partially informed about the candidate type of his favoured party, \emph{i.e.} he only knows that his favoured party has a moderate candidate and has no information about the candidate's type of the opponent. 
    In this case, by lying to an ex-post independent receiver the sender may harm the probability that his favoured party wins the election with a moderate candidate. By sending a truthful message the sender never harms the probability that his favoured party wins the election and he neither harms the probability that a moderate is elected, given that an ex-post independent voter always prefer to vote for a moderate. 
    Thus, whenever a sender is partially informed about the candidate type of his favoured party always sends a truthful message (Lemma \ref{paircheaptalk} (ii)).
    
    To shortly illustrate the core argument of point (iii) of Lemma \ref{paircheaptalk} suppose the sender is left leaning, and that he knows that the candidate of party $R$ is a moderate, but has no information about the candidate type of his favoured party, $L$.
   
   An ex-post partisan sender knows that by truthfully sharing his information with an ex-post independent receiver always increases the probability that party $R$ wins the election, while he would prefer the left party to win the election (even with an extremist). Thus, an ex-post partisan sender does not truthfully transmit his information to an ex-post independent receiver, even if his receiver is biased toward the party the sender prefers.\footnote{It is easy to show that the same reasoning applies to the case in which the ex-post partisan sender is uninformed. If perfectly informed about both candidates' type the ex-post partisan sender truthfully share his information with voters leaning toward the party he prefers, and lies otherwise.}
   
   If the sender is an ex-post independent there are two possible scenarios: 1) sender and receiver are biased toward the same party (\emph{i.e.} the ideological bliss points of both voters are on the same side of the central ideology); 2) the two voters are biased toward different parties. 
   In the first case, the sender knows that by truthfully revealing her information he does not harm the probability of his favourite moderate to being elected. In fact, a $L$-leaning receiver votes for party $R$ if and only if he knows the $R$'s candidate is a moderate and has no information about the $L$'s candidate type (like the sender would have done). Thus, when the two voters are ex-post independent and their biases are aligned the sender, by truthfully sharing his information, increases the probability of a moderate being elected without harming the probability of election of a moderate of her favoured party $L$. 
   
   In the second case truthful revelation by the sender harms the probability that his favourite party with a moderate candidate wins the election: the $R$-leaning receiver uses the sender's information to vote for the sender's opponent party, regardless of the information he holds about the sender's favourite party's candidate ($L$).
     
   Thus, while an ex-post partisan receiver always obtains truthful information in the pairwise communication game, an ex-post independent receiver obtains truthful information in two cases: 1) both the sender and the receiver have aligned preferences (they are biased toward the same party) and belong to the same ex-post group type (partisans/independents); 2) the sender is informed about the party's candidate type he prefers and has no information about the candidate type of the opponent.

    \begin{lemma}\label{lemSeq}
    	Whenever voters in the matched pair have divergent preferences and/or do not belong to the same group type, the sender's truthful messages are not credible in equilibrium.
    \end{lemma}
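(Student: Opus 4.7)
The plan is to work from the reduced form of the sender's intra-stage utility (\ref{sender}) and argue by constructing a strict profitable deviation whenever the pairing violates all three conditions of Lemma \ref{paircheaptalk}. A short rearrangement yields
\begin{equation*}
u_s(M_L, M_R \mid I^s, x, r) \;=\; \mathbb{E}[\,s - (1 - t_R)\mid I^s, x\,] \;+\; p(I^r)\cdot b(s, I^s),
\end{equation*}
where $p(I^r) \equiv \Pr(v_r = 1 \mid I^r, x, r)$ and $b(s, I^s) \equiv \mathbb{E}[\,t_L + 1 - t_R - 2s \mid I^s, x\,]$. Since the chosen message pair $(M_L, M_R)$ enters $u_s$ only through the receiver's posterior and hence through $p(I^r)$, truth-telling is incentive compatible if and only if the truthful pair attains the maximum (resp.\ minimum) of $p(I^r)$ over the four options in $\{m,\emptyset\}^2$ whenever $b(s,I^s) > 0$ (resp.\ $b(s,I^s) < 0$).

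The main step is a short case split over the situations uncovered by Lemma \ref{paircheaptalk}. These reduce to two scenarios: (a) the sender is ex-post partisan while the receiver is ex-post independent, and (b) both voters are ex-post independents with bliss points on opposite sides of $1/2$. In (a), say for an L-leaning partisan sender, $s < \alpha_l$ forces $b(s,I^s) > 0$ for every $I^s$, so the sender strictly prefers the message pair that maximizes the receiver's posterior weight on $t_L = m$ and minimizes it on $t_R = m$, namely $(M_L, M_R) = (m, \emptyset)$; this differs from the truthful pair whenever $t_R = m$ or $t_L = \emptyset$. In (b), consider an L-leaning independent sender matched with an R-leaning independent receiver; truthfully transmitting $t_R = m$ strictly increases $\Pr(v_r = 0 \mid I^r, x, r)$ while the sender, provided his own information keeps $b(s,I^s) > 0$ (which holds away from degenerate beliefs), strictly prefers $p(I^r)$ high, so $M_R = \emptyset$ is a strict deviation. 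Symmetric arguments dispose of the R-leaning mirror cases.

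The main obstacle is ensuring the inequality in (\ref{IC}) is \emph{strict} in the boundary sub-cases where the ex-post independent sender lies near the ideological center and $b(s,I^s)$ is small. There one must verify that the receiver's threshold $i^*(I^r)$ from (\ref{indifferent}) is strictly shifted by the truthful versus the deviating message, rather than coinciding. This follows from the fact that $\rho(m, m \mid I, x, k) < 1$ in any equilibrium with $\sigma \in (0,1)$ and $m > 0$, ensuring that the posteriors induced by distinct message pairs assign different probabilities to the four states and thus move $i^*(I^r)$ through the receiver's bias $r$, producing a strict change in $p(I^r)$. Combined with the case analysis, this rules out truthful messages as credible equilibrium play in every configuration outside Lemma \ref{paircheaptalk}.
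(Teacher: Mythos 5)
Your proof is built around the wrong target. Lemma \ref{lemSeq} is not a statement about the sender's incentives --- that is Lemma \ref{paircheaptalk}, which is already established --- but about the \emph{receiver's} equilibrium beliefs: even when a truthful message pair happens to be sent on a link that violates the conditions of Lemma \ref{paircheaptalk}, the receiver cannot believe it. The paper proves this in one line, as a corollary of Lemma \ref{paircheaptalk} via sequential rationality and consistency: since truth-telling is not incentive compatible on such links, the sender's equilibrium message strategy pools information sets (e.g.\ your $L$-partisan sender sends $(m,\emptyset)$ whether his information is $(m,\emptyset)$, $(m,m)$, $(\emptyset,\emptyset)$ or $(\emptyset,m)$), and a belief function consistent with that strategy assigns the message no informational content. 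Your argument re-derives the deviation incentives --- essentially reproving Lemma \ref{paircheaptalk} with a clean decomposition $u_s = \mathbb{E}[s-(1-t_R)\mid I^s,x] + p(I^r)\,b(s,I^s)$ --- and then compresses the actual content of the lemma into the final clause ``this rules out truthful messages as credible equilibrium play.'' That inference (from ``the sender would deviate'' to ``a truthful message is not believed'') is exactly the consistency step you need to state and justify; note in particular that in the sub-case you flag, where the manipulative pair coincides with the truthful one, there is no strict deviation at all, and non-credibility follows only from the pooling-plus-consistency argument, not from your deviation construction.

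There is also a concrete error in case (b). For an $L$-leaning ex-post independent sender, $b(s,I^s)>0$ does \emph{not} hold for every information set: with $I^s=(\emptyset,m)$ the sender's own cutoff is $i^*(I^s)=\alpha_l < s$, so $b(s,I^s)<0$ and the sender \emph{wants} the receiver to vote $R$; he would then happily transmit $M_R=m$ to an $R$-leaning receiver. Your parenthetical ``which holds away from degenerate beliefs'' papers over this, and the strict-deviation argument fails there. The non-credibility on that link again has to come from consistency across the sender's other information sets (in which he sends $(m,\emptyset)$ regardless of the truth), not from a pointwise profitable lie. So the decomposition and the partisan-sender case are fine and even more explicit than the paper, but the proof as written neither handles all sub-cases correctly nor supplies the consistency argument that constitutes the lemma.
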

    

    \subsection{Social Networks and Echo Chambers}
    
    In this section I characterize communication equilibrium in the social network. The structure of this equilibrium takes a specific form, which I refer to as {\em echo chambers}.
    
    Each voter personal network is characterized by $k \geq 1$ senders, by $ z > 0 $ receivers, and by the probability, $\beta \in (0, 1)$, that both $i$ and her $z+k$ peers are leaning toward the same party (degree of homophily). 
    Voters' networks are acyclic and the richness of the network, {\em i.e.} the number of links of each voter ($k$ and $z$), and the degree of homophily, $\beta $, are common knowledge. 
    All voters simultaneously receive $ k $ and send $z$ pairs of private messages about the candidate type of each party $( M_{L}(t_{L}), M_{R}(t_{R}) )$ from and to their peers.\footnote{Allowing the information to travel for more than one step complicate the analysis, but it does not affect the qualitative features of my results. In each matched pair of voters an ex-post independent receiver discards the message pairs received from both ex-post partisan senders and from ex-post independent senders if they are leaning for her less favoured party. Thus, the receiver behaves as if she never received such messages. In instead the voters receives credible message pairs then she updates her believes and truthfully shares his information with her next-step receiver if and only if either they are both ex-post independent with aligned preferences or the receiver is an ex-post partisan. This implies that truthful and credible information travels along paths in which the Lemma \ref{paircheaptalk} and \ref{lemSeq} hold in each node and it is blocked otherwise.}
    
    Now I am ready to state my main result:
    \begin{theorem}\label{Echo}
    	\ In any pairwise $k$-player cheap-talk game there are two ideological cut-off points, $ 0 < q_{l}(k,\beta, x_L) <1/2< q_{r}(k,\beta, x_R) < 1 $, such that truthful revelation in each matched pair is possible and credible if and only if the bliss points of both the sender and the receiver belong to the interval defined by those cut-off points and their preferences are aligned (\emph{i.e.} both are biased for the same party).
    \end{theorem}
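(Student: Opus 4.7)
The plan is to reduce the $k$-sender game to the pairwise setting of Lemmas~\ref{paircheaptalk}--\ref{lemSeq} and then identify the cut-offs endogenously. Because the network is acyclic and the $k$ message-pairs are exchanged in a single simultaneous round, each matched link $(s,r)$ can be analyzed on its own: from the perspective of a given sender, the receiver's other $k-1$ messages enter her posterior only as independent noise, so the sender's incentive-compatibility inequality reduces, link by link, to the one already derived for the pairwise case, just with the network posterior $\rho^{*}(m,m \mid I,x,k)$ in place of the bilateral one. I will therefore treat the $k$-player equilibrium as a collection of bilateral sub-games glued together by a common posterior.

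Next, I will propose the cut-offs as the natural extension of the ex-post thresholds from Lemma~\ref{alphak}:
\[
q_{l}(k,\beta,x_L) = \tfrac{1}{2} - \tfrac{m}{4}\bigl[1 - \rho^{*}(m,m \mid I, x, k)\bigr], \qquad q_{r}(k,\beta,x_R) = \tfrac{1}{2} + \tfrac{m}{4}\bigl[1 - \rho^{*}(m,m \mid I, x, k)\bigr],
\]
where $\rho^{*}$ is the posterior generated when truthful revelation in each link obeys the conjectured rule. The network parameters enter through $\rho^{*}$: $k$ governs how many message-pairs are aggregated and $\beta$ governs the probability that any given sender is ideologically aligned with her receiver and hence credible. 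Sufficiency is then immediate from Lemma~\ref{paircheaptalk}(iii): any pair with both $s,r \in (q_l,q_r)$ on the same side of $1/2$ consists of two ex-post independents with aligned biases, for whom truthful revelation is incentive-compatible by construction.

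For necessity, I will invoke Lemma~\ref{lemSeq} in two cases. If $s$ and $r$ lie on opposite sides of $1/2$, their preferences are misaligned and a truthful message from $s$ strictly shifts the pivotal receiver toward $s$'s less-preferred party, so $s$ strictly gains by deviating. If instead at least one voter lies outside $(q_l,q_r)$, then one of them is ex-post partisan while the other is ex-post independent: the group-type matching required by Lemma~\ref{paircheaptalk}(iii) fails and Lemma~\ref{lemSeq} rules out credible truthful transmission. The apparent exception in Lemma~\ref{paircheaptalk}(i) — an ex-post partisan receiver — leaves the sender indifferent across messages, and I will argue this does not constitute \emph{credible} truthful communication in the strict sense demanded by the theorem, since the indifference is resolved only by the tie-breaking convention.

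The main obstacle is the circularity of the construction: $q_l$ and $q_r$ depend on $\rho^{*}$, but $\rho^{*}$ is itself determined by the mass of pairs that transmit truthfully, which in turn is determined by $(q_l,q_r)$. I will close this loop with a fixed-point argument. The mapping that sends a candidate interval $(q_l,q_r) \subset (0,1)$ to the induced posterior $\rho^{*}(m,m\mid I,x,k)$ and thence back to the cut-offs displayed above is continuous in its arguments and maps a compact subset of $[0,1]^{2}$ into itself, so Brouwer's theorem delivers a solution. Strict interiority $0 < q_l < 1/2 < q_r < 1$ then follows because for any positive advertising intensities and any $k \geq 1$, $\beta \in (0,1)$, the posterior $\rho^{*}(m,m\mid I,x,k)$ is bounded strictly between $0$ and $1$, which places each cut-off strictly inside its respective half of the ideological space.
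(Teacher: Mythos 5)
Your overall architecture matches the paper's: hold the other $k-1$ senders at the two-player equilibrium strategies, check one link at a time, use Lemma~\ref{paircheaptalk} for sufficiency and Lemma~\ref{lemSeq} for necessity. But there are two substantive problems. First, the heart of the paper's proof is the explicit verification that the $k$-th sender's incentive constraints still bind at the same place: for each information set $I^{s}\in\{(m,m),(\emptyset,m),(\emptyset,\emptyset)\}$ one recomputes the payoffs from every message pair with the exponents $\beta(k-1)+1$ and shows that the resulting threshold is $\overline{s}_l=\frac{1}{2}-\frac{m}{4}\frac{1-\sigma}{1-\sigma+\sigma(1-x_L^{*})^{\beta k+1}}$, which is then observed to coincide with the receiver's ex-post-independence boundary. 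You assert that the IC inequality ``reduces, link by link, to the pairwise one with the network posterior'' but never do this computation, so the identity between the sender-side credibility threshold and the receiver-side cut-off --- which is exactly what makes the chamber an ``enclosed system'' --- is assumed rather than proved. Second, the Brouwer fixed-point argument is solving a problem that does not exist in this model. Because the degree of homophily $\beta$ is exogenous and each voter evaluates her $k$ private messages one by one, the posterior of a voter who receives only uninformative messages from her $\beta k$ aligned peers is pinned down in closed form as $\sigma(1-x_J^{*}(m))^{\beta k+1}/\bigl(\sigma(1-x_J^{*}(m))^{\beta k+1}+1-\sigma\bigr)$; it does not depend on where the cut-offs lie, so there is no loop to close and the cut-offs drop out explicitly. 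Invoking a fixed point here signals that you have not seen that the construction is already non-circular, and your map is in any case not defined precisely enough to apply Brouwer.

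There is also a slip in your necessity case analysis: from ``at least one voter lies outside $(q_l,q_r)$'' you infer ``one of them is ex-post partisan while the other is ex-post independent,'' which is false --- both could lie outside on the same side (two aligned partisans), a configuration in which Lemma~\ref{paircheaptalk}(i) and (iii) actually permit truthful revelation. Your tie-breaking discussion of the partisan-receiver case partially covers this, and it is a tension present in the paper's own statement, but as written your case split does not exhaust the possibilities. To repair the proof you would need to (i) carry out the payoff comparisons for the $k$-player deviation explicitly, (ii) replace the fixed-point step with the direct Bayesian computation of the posterior after $k$ uninformative credible messages, and (iii) redo the necessity cases so that partisan--partisan pairs are handled rather than excluded by assumption.
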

    \begin{proof}
    	See Appendix.
    \end{proof} 
    
    As voter $i$ has $k+1$ sources of information, the parties' advertisement and the $k$ message pairs he receives from their network's peers, the sender's informational incentives depends not only on the information he gathers trough exposure to the parties' advertisement and the receiver's ideological bliss point, but also on the information that each of his receivers gather from the other $k-1$ senders.
    
    Given that within the sampled network of each voter there is no preference uncertainty and the communication among voters has the form of private messages, each voter evaluates the $k$ messages pairs received one by one. It follows that if a voter receives at least one credible and informative message about the candidate's type of party $J$ then he knows that the $J$'s candidate is a moderate with probability one.
    Conversely, if the receiver observes $k$ credible but uninformative messages about party $J$ candidate type then he knows that the candidate of party $J$ is a moderate with probability $ \sigma (1-x_{J}(m))^{k} / (1-x_{J}(m))^{k} \sigma + 1  -\sigma $ and an extremist otherwise. 
    Given the voters' personal network structure there exist two ideological cutoffs,
    \begin{displaymath}
    q_{l}(k,\beta, x_L) = \frac{1}{2} - \frac{m}{4} \frac{ 1-\sigma}{1-\sigma + \sigma (1-x^{*}_{L}(m))^{\beta k + 1} }, 
    \end{displaymath}
    and 
    \begin{displaymath}
    q_{r}(k,\beta, x_R) = \frac{1}{2} +  \frac{m}{4} \frac{1-\sigma}{1-\sigma+ \sigma (1-x^{*}_{R}(m))^{\beta k + 1}} 
    \end{displaymath}
    that along side with the ideological center point define what I call 'echo chambers': an enclosed system in which information is truthfully and credibly shared and outside which no credible information is transmitted. 
    
    \noindent A voter whose ideological bliss point is between the cut-offs $ (q_l (k,\beta, x_R), 1/2 )$ (resp. $ (1/2 , q_r (k,\beta, x_R) )$) truthfully and credibly transmit information only to her peers leaning toward party $L$ ($R$), and do not convey any information otherwise.\footnote{It can be shown that this result will also hold if network affiliation were costly. To see this, consider the case in which an ex-post independent voter has to decide if and which network to join at a cost $ \alpha >0$. From Theorem 1 it follows that he rely on the information received from voters with aligned biases that belong to his group type and he disregards all the information received by others. Then, if the degree of homophily is high enough, an independent voter would be willing to incur a cost to join a network, which, in turn, reinforces the echo chambers effect. Note also that an ex-post partisan would never pay to buy information that she is not willing to use.} 
    
    It is easy to prove that the two cut-offs $q_{l}(k,\beta, x_L)$ and $q_{r}(k,\beta, x_R)$ are, respectively, decreasing and increasing in the degree of homophily $ \beta$, in the sources of receiver information $k$, and in the advertising level $x_J (t_J)$, which imply that as the number of reliable sources of information increases the echo chambers become wider, {\em i.e.} the distance $\mid q_j(k,\beta, x) - 1/2 \mid$ for each $j \in \{l , r\}$ increases.

    \section{Echo Chambers and Parties' Advertising Strategies}
    
    I assume parties can advertise their candidate either by using a random technology, which allow each voter to be exposed at the advertisement with the same positive probability, or by relying on a group targeting technology, which direct the advertisement only to voters who are either $L$-leaning or $R$-leaning. 
    I compare the maximum unit cost that a party is willing to pay to advertise its candidate under both technologies and I show how those choices are affected by the echo chambers. 
    
    \noindent I show that if the voters' network is characterized by many (resp. few) source of credible information, \emph{i.e.} $k \beta $ is large (small), parties exploit the echo chambers by advertising their candidate randomly (to targeted groups).

 	\subsection{Random advertising}
 	Suppose party $L$ has a moderate candidate and it randomly advertises him. Then, the proportion of voter informed about $L$'s candidate type is
 	$$ \gamma_{L} (x_{L}, k, \beta) = 1 - \big(1 - x_{L}(m)\big)^{ \beta k + 1},$$
 	which is increasing in the number of information sources of the voters $k$, in the degree of homophily in the network $ \beta $, and in the level of advertising $x_J(m)$. 
 	
    When both parties have a moderate candidate, given they play symmetric strategies, each party wins the election with probability $\pi(m,m\mid x, k, \beta) = 1/2$. If party $R$ has an extremist candidate what matter is the distribution of indifferent voters which depends on the information set held. Then, for any possible information set $I$, the votes share of party $L$ is
    \begin{displaymath}
    \mu( x \mid \theta, k, \beta )=  \frac{1}{2} + \frac{m}{4} \rho( m, e \mid I,  k, \beta) \gamma_{L} (x_{L}, k, \beta), 
    \end{displaymath}
    which is increasing in $\gamma_{L} (x_{L}, k, \beta)$. Thus, when the voters' personal network features a high degree of homophily ($\beta$) and/or the network richness ($k$) is high, the echo chamber are wide and advertisement is echoed among voters.\footnote{Wider echo chambers refers to the two cut-offs in Theorem \ref{Echo}, as $\beta$ or $k$ or both increase the two cut-offs $(q_l(k,\beta, x_L), q_r(k,\beta, x_R))$ moves toward the extremes of the ideological line making the echo chambers wider, {\em i.e.} credible information travels among an higher number of voters.}  Conversely, if the voters' personal network features a low degree of homophily and/or a small number of peers, the echo chambers are small preventing the diffusion of information among voters. As result, many voters remain uninformed and they will vote for their favoured party.

 	\begin{theorem}\label{Random}
    In a symmetric pure strategy equilibrium there exist critical levels of richness of the network $k$ and degree of homophily $\beta$, such that party $J$ randomly advertises a moderate candidate if and only if 
 	\begin{equation*}
    c(k, \beta) < c^* (k, \beta) \equiv \frac{ (2 -3 m) (1 - \sigma) (\beta k + 1)(1- \rho(m, m \mid I, \beta, k)) }{16} ,
 	\end{equation*}
 	and never advertise a moderate candidate otherwise.
    \end{theorem}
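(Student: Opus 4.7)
\textbf{Proof plan for Theorem \ref{Random}.} By the symmetry of the game I restrict attention to party $L$, and by Lemma \ref{lemcandidate} the only case in which advertising is used is $t_L=m$. The first step is to translate Theorem \ref{Echo} into an expression for the fraction of voters who end up informed about $t_L$. A voter learns $t_L=m$ either directly (with probability $x_L(m)$) or through a credible message from one of her $k$ senders. Theorem \ref{Echo} says that only senders whose bliss point lies inside the same echo chamber — in expectation $\beta k$ of the $k$ senders — transmit credibly, and each of those is itself informed with probability $x_L(m)$. The events ``direct advertisement fails'' and ``each aligned sender is uninformed'' are independent, so the probability that a representative voter remains uninformed is $(1-x_L(m))^{\beta k+1}$, giving $\gamma_L(x_L,k,\beta)=1-(1-x_L(m))^{\beta k+1}$, exactly as stated at the opening of Section 5.1.

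Next I compute the vote share and the win probability. When $t_R=m$ symmetric random advertising by both parties delivers $\pi_L(m,m\mid x,k,\beta)=1/2$, independent of $x_L$. When $t_R=e$, Lemma \ref{lemcandidate} gives $x_R(e)=0$, so no voter receives any informative signal about $R$. Using the formula (\ref{indifferent}) for the indifferent voter across the four information sets $\{(m,\emptyset),(\emptyset,\emptyset)\}$ that actually arise, aggregating as in (\ref{Uvot})-(\ref{votingdec}) and using the posterior $\rho(m,e\mid I,k,\beta)=1-\rho(m,m\mid I,k,\beta)$, I obtain the compact expression
\begin{equation*}
\pi_L(m,e\mid x,k,\beta)\;=\;\tfrac{1}{2}+\tfrac{1}{8}\,\rho(m,e\mid I,k,\beta)\,\gamma_L(x_L,k,\beta),
\end{equation*}
which stays in the interior region of the $\pi_L$ formula because $\rho\leq 1$ and $m<1/4-\tau/2$.

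Substituting into (\ref{UpartL}) and collecting constants, the $(m,m)$ term is independent of $x_L$, while the $(m,e)$ term contributes
\begin{equation*}
(1-\sigma)\,\tfrac{2-3m}{16}\,\rho(m,e\mid I,k,\beta)\,\gamma_L(x_L,k,\beta)\;-\;c\,x_L(m).
\end{equation*}
Because $\gamma_L(x_L,k,\beta)=1-(1-x_L(m))^{\beta k+1}$ is strictly concave and strictly increasing in $x_L(m)$ for $\beta k+1\geq 1$, and the cost term is linear, $U_L(\cdot\mid m)$ is strictly concave in $x_L(m)$. Hence $x_L(m)=0$ is a best response if and only if the one-sided derivative at zero is non-positive, and any positive $x_L(m)$ is a best response only if that derivative is positive.

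Differentiating and evaluating at $x_L(m)=0$ gives
\begin{equation*}
\left.\frac{\partial U_L}{\partial x_L}\right|_{x_L=0}\;=\;\frac{(2-3m)(1-\sigma)(\beta k+1)\bigl(1-\rho(m,m\mid I,\beta,k)\bigr)}{16}\;-\;c,
\end{equation*}
which is positive iff $c<c^\ast(k,\beta)$ as defined in the statement. Combining with concavity yields the if-and-only-if conclusion: random advertising occurs exactly when $c<c^\ast(k,\beta)$, and $x_L^\ast(m)=0$ otherwise. The main obstacle is the first paragraph — namely, justifying that the $(1-\beta)k$ non-aligned senders can be treated as if absent, so that the informed fraction factorizes cleanly into $1-(1-x_L)^{\beta k+1}$; once this reduction is in hand, the remainder is a one-dimensional concave optimization whose threshold is read off from the slope at the origin.
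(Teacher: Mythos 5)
Your proposal is correct and follows essentially the same route as the paper: derive the informed fraction $\gamma_L(x_L,k,\beta)=1-(1-x_L(m))^{\beta k+1}$ from the echo-chamber structure, compute $\pi_L(m,e\mid x)=\tfrac{1}{2}+\tfrac{1}{8}\rho(m,e\mid I,k,\beta)\gamma_L$, and read the advertising threshold off the marginal benefit $(2-3m)(1-\sigma)(\beta k+1)(1-\rho(m,m\mid I,\beta,k))/16$ against the marginal cost $c$. The only cosmetic difference is that you obtain the cutoff from concavity plus the one-sided derivative at $x_L(m)=0$, whereas the paper solves the interior first-order condition and separately compares the resulting utility with that of not advertising; both land on the same $c^*(k,\beta)$.
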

 	\begin{proof}
 		See Appendix.
 	\end{proof}

    \subsection{Targeted Advertising}
    The new informational environment has made available a large amount of voters' personal data to political actors: voters/users to join social media create profiles in which they specify personal information like gender, age, education, jobs, and beliefs. Access to this data virtually allows parties to target their advertisement and manipulate the flow of information diffusion exploiting the presence of echo chambers. 
    In this section I find the conditions under which parties would rely on a group targeting advertisement technology. I assume that if a party targets its advertisement to voters biased toward party $J$ then all those voters receive the advertisement with probability one.
    
    \begin{theorem}\label{target}
	In a symmetric pure strategy equilibrium parties never target their advertisements to their own supporters, and there exist critical levels of richness of the network, $k$, and degree of homophily, $\beta$, such that party $L$ advertises a moderate candidate exclusively to the opponent's supporters if and only if the cost of advertisement is such that 
		\begin{equation*}
	c(k, \beta) < \hat{\bar{c}} \equiv \frac{(2-3m -\sigma m)}{4}.
		\end{equation*}
	\end{theorem}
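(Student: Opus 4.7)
The plan is to establish the two parts of the theorem separately, reusing the benchmark logic of Theorem \ref{benchRandom} while accounting for the echo chambers from Theorem \ref{Echo}. First I would dispose of targeting one's own supporters. If party $L$ sets $x^{d}_{L,L}(m)=1$, every $L$-leaning voter becomes perfectly informed of $t_L=m$, but by Lemma \ref{alphak} these voters (whether ex-post $L$-partisans or ex-post $L$-independents) already vote for $L$ under any information set that leaves them on the left of the median; the echo chamber among them only propagates information they are already using the ``right way.'' Hence the vote share of $L$ does not change relative to $x_L=0$, the winning probability remains $1/2$ in every state, and the positive cost $c>0$ makes this strategy strictly dominated by not advertising.

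Next I would compute the payoff from targeting the opponent's supporters, $x^{\tau}_{L,R}(m)=1$. Under this strategy every $R$-leaning voter directly learns $t_L=m$; the echo chamber among $R$-leaning voters adds nothing further since they are already saturated with that information, and $L$-leaning voters remain unaffected because by Theorem \ref{Echo} truthful information does not cross the center. An $R$-leaning independent with $i\in(1/2,\alpha_r)$ now knows $t_L=m$, and applying the voting rule (\ref{votingdec}) with the symmetric equilibrium belief about $t_R$ shows that this group flips to $L$ whenever $R$ runs an extremist. Thus, by symmetry of the game, $\pi_L(m,e\mid x^{\tau})=1$ and $\pi_L(m,m\mid x^{\tau})=1/2$, while the no-advertising baseline yields $\pi_L(m,e\mid 0)=1/2$ and $\pi_L(m,m\mid 0)=0$ (since $R$ targets the opposite group by symmetry).

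Plugging these probabilities into (\ref{UpartL}) and subtracting, the gain from targeting becomes
\begin{equation*}
U_L(x^{\tau}_{L,R}\mid m)-U_L(0\mid m)=\tfrac{1}{2}\sigma(1-2m)+\tfrac{1}{4}(1-\sigma)(2-3m)=\frac{2-3m-\sigma m}{4}=\hat{\bar{c}},
\end{equation*}
so targeting the opponent's base is profitable if and only if $c<\hat{\bar{c}}$. Note that the right-hand side is independent of $(k,\beta)$, reflecting the fact that direct targeting of the whole opposing group makes the echo-chamber amplification redundant on that side of the ideological line.

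Finally, to obtain the ``exclusively'' part of the statement I would compare this with the threshold $c^{*}(k,\beta)$ from Theorem \ref{Random}. Since $c^{*}(k,\beta)$ is monotone increasing in $k\beta$ and $\hat{\bar{c}}$ is not, there exist critical levels $\underline{k}$ and $\underline{\beta}$ below which $c^{*}(k,\beta)<\hat{\bar{c}}$, so that the range $c^{*}(k,\beta)<c<\hat{\bar{c}}$ is non-empty; on that range random advertising is unprofitable while targeting the opponent's supporters still is, making targeting the unique best response. The main obstacle I expect is the careful bookkeeping of the belief $\rho$ and of the cutoffs $\alpha_l,\alpha_r$ under the asymmetric information structure induced by targeting, in particular verifying that $R$-leaning independents indeed flip when $t_R=e$ and that no $L$-leaning voter's behavior is perturbed by the advertisement or by the echo-chamber propagation on the opposite side.
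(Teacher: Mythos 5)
Your computation of the threshold $\hat{\bar{c}}$ in the second part is correct and is essentially the paper's own argument made explicit: the paper also evaluates the deviation from the candidate equilibrium in which \emph{both} parties target the opponent's supporters (so that $\pi_L(m,m)=1/2$, $\pi_L(m,e)=1$ against $\pi_L(m,m)=0$, $\pi_L(m,e)=1/2$ under no advertising), deferring the algebra to the benchmark section, and your difference $\tfrac{1}{2}\sigma(1-2m)+\tfrac{1}{4}(1-\sigma)(2-3m)=(2-3m-\sigma m)/4$ reproduces it exactly. Your closing comparison of $\hat{\bar{c}}$ with $c^*(k,\beta)$ also matches the paper's discussion of the critical level $\bar{k\beta}$.

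The first part, however, contains a genuine flaw. You dismiss targeting one's own supporters by claiming that, by Lemma \ref{alphak}, $L$-leaning voters ``already vote for $L$ under any information set that leaves them on the left of the median,'' so that the vote share is unchanged relative to $x_L=0$ and the strategy is dominated by not advertising. That premise is false, and it contradicts the mechanism the whole theorem rests on: an ex-post $L$-leaning \emph{independent} (bliss point in $(\alpha_l,1/2)$) who learns only that $t_R=m$ votes for party $R$. Consequently, when party $R$ is advertising a moderate (randomly, or by targeting $L$-leaning voters), the no-advertising baseline gives party $L$ a winning probability strictly below $1/2$ in state $(m,m)$, whereas targeting its own supporters restores $1/2$ by immunizing its own independents; the claimed domination by not advertising therefore fails, and indeed the dominating alternative must be chosen case by case. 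The paper handles this by splitting on $c$ relative to $c^*(k,\beta)$: when $c>c^*(k,\beta)$ the opponent does not advertise and only then is targeting one's own base outcome-equivalent to silence (so the cost kills it); when $c<c^*(k,\beta)$ targeting one's own base is instead shown to be dominated by \emph{random} advertising, which yields the same $1/2$ against a moderate but strictly more than $1/2$ against an extremist. Your conclusion is correct, but the argument as written proves it via a comparison that does not hold in the relevant regime; it needs to be replaced by the comparison with random advertising (or with targeting the opponent's base, which dominates for the same reason).
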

	\begin{proof}
		See Appendix.
	\end{proof}
	
    Given that parties never advertise an extremist, to see the intuitions behind Theorem 3, as before, it is sufficient to focus on the case in which the candidate of party $L$ is a moderate. 
 	
    First, I focus on the first part of the statement: parties never exclusively target the advertisement on their own supporters.
    From Theorem \ref{Echo} follows that the information does not travel between left and right leaning voters even if they belong to the group of ex-post independent voters. Thus, if party $L$ targets the advertisement to its own supporters all the right leaning voters remain uninformed about its own candidate, and consequently they vote for party $R$. Thus, party $L$ wins the election with probability $1/2$, regardless of the candidate type of its opponent. If party $L$ randomly advertise its candidate whenever it faces an extremist opponent it wins the electoral competition with probability $ \pi_{L}(m,e \mid x) = 1/2 + (1- \rho(m,m \mid I, \beta, k)) (1- (1-x_{L})^{(\beta k + 1)} > 1/2$, while if it faces a moderate advertised randomly, then it wins with probability $1/2$. In other words, by targeting the advertisement to its own supporters, a party wastes its persuasion power on those voters who would have voted for it even if they were not hit by the advertisement. 
 	
    The second part of Theorem \ref{target} links the network structure, which affects information diffusion, to parties' choice of advertisement technology, {\em i.e.} under which conditions parties prefer to target their advertisement to the voters leaning toward their opponent (opponent's supporters) rather than randomly advertise their candidate to all voters. 
    Consider the case in which party $L$ has a moderate candidate and it targets the advertisement to the opponent's supporters, while party $R$ randomly advertise its candidate whenever it has a moderate. 
    By targeting its advertisement to $R$'s supporters all the ex-post left independent voters will be uninformed about their favoured party's candidate ($t_L = \emptyset$), while every ex-post right independent biased voter will hold the information $ t_{L} = m $. 
    
    It follows that $R$'s candidate is an extremist, party $L$ wins the electoral competition with probability one. As both ex-post $R$-leaning independent voters  - with information $(m, M_L(m), \emptyset, M_R(\emptyset))$ - and the uninformed ex-post $L$-leaning independent voters vote for party $L$.
    If, instead, party $R$ has a moderate and advertise him randomly then a proportion of ex-post independent voter 
    $ \gamma_R (x_R , k, \beta)= 1- (1-x_{R}(m))^{\beta k + 1} $ vote for party $R$. As a consequence, by targeting the opponent supporters ($ x_{LR}^d (m)$), party $L $ wins the election with probability $\pi^{d}_{L}(m,m \mid x_L^{d},x_R) = 1/2 - ( (1- \rho (m,m \mid I, \beta, k)) (1-2(1- x_{R})))/16 < 1/2 =  \pi_{L}(m,m \mid x) $, where $\pi_{L}(m,m \mid x) $ is the probability that party $L$ wins the election if it would have randomly advertised a moderate. 
    
    Theorem \ref{Random} and \ref{target} highlight the relationship between the advertisement's cost and the network structure. The wider (resp. small) the echo chambers are the lowest (higher) is the parties' incentive to target the supporters of their opponent. 
    
    It is easy to show that $\hat{\bar{c}}(k, \beta) < c^* (k, \beta)$ for $k \beta \geq \bar{k \beta}$. When voters' network is characterized by high degree of homophily and/or the richness of the network it acts as an information diffusion device.\footnote{Where {\small
    $\bar{k \beta} \equiv \frac{(2-3m)\left( (2+(1-
\rho(m,m \mid I, \beta, k))\sigma)+(1+\rho(m,m \mid I, \beta, k))\right) -4m \sigma)}{(2-3m)(1-\rho(m,m \mid I, \beta, k))(1- \sigma) }$ } (see Appendix for derivation).} In turn this implies that by randomly advertising their candidate parties can reach many echo chambers, which spread the information, and allow parties to maximize the probability of winning the election for any given level of advertisement $x_j(m)$.
    On the contrary, when the voters' network is characterized by low homophily and/or the richness of the network is small, $k \beta <  \bar{k \beta}$, the network's structure entails a significant waste of information. This implies that the proportion of voters informed about the $R$'s candidate type, $ \gamma_R(x_R , k, \beta)$, is small and party $L$'s probability to win the election by targeting the $R$-leaning voters is high. 
 	
    My results are robust to the introduction of heterogeneity between the two ex-ante bias groups of voters. Let the degrees of homophily within the left and right group of independent voters be, respectively, $\beta_l < \beta_r $. From Theorem \ref{Echo} follows that the $R$-leaning voters have wider echo chambers ($ | 1/2 - q_l (\beta_l, k) | < | 1/2 -  q_r(\beta_r, k) |$) which implies, ceteris paribus, that $R$-leaning ex-post independents would have access to more credible sources of information than the ex-post $L$-leaning independents have. 
 	
    Thus, party $L$ has two advantages: first, the ex-post left partisan group will be wider than the ex-post right partisans one; second, party $L$ needs less advertising than the right party in order to persuade the same proportion of $R$-leaning ex-post independent voters with respect to that party $R$ needs to persuade the $L$-leaning ex-post independent voters.
    This setting mimics a scenario in which the two parties incur different costs to randomly advertise their candidate. 
    From Theorem \ref{Random} and \ref{target} follows that party $L$ will randomly advertise its moderate candidate and exploit the echo chambers without jeopardizing the support of its own supporters.

 	\section{Discussion}
 	
    The preceding analysis aimed to disentangle the mechanisms by which communication through social media networks shapes electoral competition.
    I designed a model to highlight the structure of the problem in its simplest possible form, but the main results do not hinge on some of the specific assumptions I made. For example, both the symmetry in candidates types and in the voters' network structure are convenient but not essential.  
    I have also abstracted from negative advertising: allowing parties to spread truthful negative advertising about the candidate's type of their opponent translate in a model with a larger set of signals, but the structure of the problem remains unchanged.\footnote{Although part of the political content on social media is negative information about competitors (produced by candidates rather than parties' elites), it has been proven by \cite{NegAdV_response} and \cite{negAdEmp} that negative advertising elicits negative effects for both the target and the sponsor. \cite{negADV} show that negative advertising mobilizes partisans, but depresses turnout among independents.
    }
    
    Another important assumptions in the model is that voters network each voter knows exactly the ideology of the other voters in her network. This implies that, in equilibrium, a voter will be fully convinced that a candidate is moderate as soon as she either receives party's advertisement or a single cheap-talk message from someone who belongs to her echo chamber. That is definitely not true once one consider the possibility of preference uncertainty in the voters' network.
	In this case voters would need to look at the distribution of messages, as one message is not going to be sufficient, and the parties might need to invest more on advertisement to convince a large enough mass of voters.
    
   
   

    \subsection{Strategic candidate's selection}
    In what follows I briefly analyse the case in which parties can strategically select their candidate. 
    I show that if advertisement technology is inefficient parties select an extremist candidate (and never advertise him), otherwise parties randomize between the two types of candidate and advertise only a moderate one. 
    To simplify the argument I assume that parties may only randomly advertise their candidates. 
    Let $s_J= (\sigma_J(t_J),x_J(t_J))$ be the strategy of party $J$, where $\sigma_J(t_J)$ is now the probability that party $J$ selects a candidate of type $t_J \in \{e, m\}$.\footnote{The Theorem \ref{stratCand} is an easy corollary of the Proposition 1 of \cite{galeottiMatt_personal_2011}, in which $\overline{c}(k, \beta) = \frac{(2-7m) (1+ \beta k)}{16(1- k \beta (1- \underline{p})}$ and $\underline{p}$ solves $\frac{16 c}{2-3m} =  (1+ \beta k) \underline{p}^{\beta k}, $ with  $\underline{p} = 1-x$. }
    \begin{theorem}\label{stratCand}
    For every $k$ and $\beta$ there exist a critical level of $\overline{c}(k, \beta)$ such that in equilibrium:
    \begin{itemize}
        \item[i)] If $c \geq \overline{c}(k, \beta)$ both parties always run with an extremist and do not advertise him.
        \item[ii)] If $c < \overline{c}(k, \beta)$ parties randomize between moderate and extremist and $\sigma^*$ and $x^*= x^*(m)$ solve
        \begin{eqnarray}
    &  &  (1- \sigma) (\beta k + 1)  \left(1- \rho (m , m \mid I , k, \beta) (1-x)^{\beta k} \right)=
       \frac{16 c}{2 -3 m} \\
      &  &    1- \rho (m , m \mid I , k, \beta) \left(1- (1-x)^{\beta k+1}\right) = \frac{ 4 m + 16 c x^*}{2-3m}.
        \end{eqnarray}
    \end{itemize}
    
    \end{theorem}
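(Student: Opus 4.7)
The plan is to reduce each party's strategy to a two-dimensional choice $(\sigma_J,x_J(m))\in[0,1]^2$, using Lemma \ref{lemcandidate} to pin down $x_J(e)=0$, and then to characterize the symmetric equilibrium via first-order conditions. I would first write party $L$'s expected payoff against a symmetric opponent playing $(\sigma,x)$ by plugging the four states $(t_L,t_R)\in\{e,m\}^2$ into expression (\ref{UpartL}), using the winning probabilities derived in Section 5: $\pi_L(m,m)=\pi_L(e,e)=1/2$, $\pi_L(m,e)=\tfrac{1}{2}+\tfrac{1}{8}\,\rho(m,e\mid I,k,\beta)\,\gamma_L(x_L,k,\beta)$, together with its symmetric counterpart for $\pi_L(e,m)$, where $\gamma_L(x_L,k,\beta)=1-(1-x_L)^{\beta k+1}$ and the posterior $\rho(m,m\mid I,k,\beta)$ is driven by the equilibrium mixing $\sigma$ and intensity $x$ through Bayes' rule.

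Next, I would take the partial derivatives of $U_L(\sigma_L,x_L\mid\sigma,x)$ and evaluate them at the symmetric profile $\sigma_L=\sigma$, $x_L=x$. The derivative in $x_L$ balances the marginal gain in the probability of winning against an extremist opponent (the only state in which $\pi_L$ depends on $x_L$) against the marginal advertising cost $c\sigma_L$; after simplification and dividing through by $(2-3m)/16$, this should reproduce equation~(3) of the statement. The derivative in $\sigma_L$ compares the policy-weighted change in $\pi_L$ across the asymmetric states against the expected advertising cost $cx$ incurred by committing to a moderate; setting it to zero and simplifying should yield equation~(2). Because the two conditions come from independent FOCs, the pair $(\sigma^*,x^*)$ is simultaneously determined by them, with $\rho(m,m\mid I,k,\beta)$ itself expressed in terms of $(\sigma^*,x^*)$.

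For the corner regime, I would verify that when $c$ is large the best response to $(\sigma,x)=(0,0)$ is itself $(0,0)$: given that the opponent runs an extremist and does not advertise, running an extremist wins with probability $1/2$ at zero cost, whereas deviating to a moderate advertised with intensity $x_L$ yields a winning premium proportional to $\gamma_L(x_L,k,\beta)$ at cost $cx_L$, and for $c$ above a threshold no such deviation is profitable. The critical cost $\overline{c}(k,\beta)$ is then defined by equating the marginal benefit and marginal cost of this deviation at the deviation-optimal $x_L$, which after solving for the binding $\underline{p}=1-x$ delivers the closed form given in the footnote. In parallel one must check that for $c<\overline{c}(k,\beta)$ the pair $(\sigma^*,x^*)$ produced by the FOCs lies in $(0,1)\times(0,1]$, so that the mixed equilibrium actually exists.

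The main obstacle will be ensuring that the FOCs characterize genuine best responses rather than merely critical points: since $\sigma_L$ and $x_L$ enter the payoff multiplicatively and $\gamma_L$ is nonlinear in $x_L$, $U_L$ is not automatically concave, and one has to rule out profitable deviations to the corners $\sigma_L\in\{0,1\}$ and $x_L\in\{0,1\}$. I would follow the strategy of Galeotti and Mattozzi's Proposition~1 --- signing the cross-partial $\partial^2 U_L/\partial\sigma_L\partial x_L$ along the symmetric locus and checking boundary behaviour directly --- since it is precisely this boundary check that pins down the functional form of $\overline{c}(k,\beta)$ stated in the footnote and delivers the dichotomy in the theorem.
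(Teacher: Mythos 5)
Your plan follows essentially the same route as the paper, which in fact offers no formal proof of this theorem beyond the verbal argument in Section 6.1 and the footnote deferring to Proposition 1 of Galeotti and Mattozzi (2011): rule out the pure-strategy profiles (always-moderate fails because equilibrium advertising would then be zero and the deviation to an unadvertised extremist is strictly profitable; always-extremist survives iff the deviation to an advertised moderate is unprofitable, which is exactly what defines $\overline{c}(k,\beta)$), and characterize the interior equilibrium by the advertising first-order condition together with the indifference condition over candidate types, with $\rho(m,m\mid I,k,\beta)$ determined by $(\sigma^*,x^*)$ through Bayes' rule. The one slip is cosmetic: the first displayed equation of the statement is the advertising FOC (the analogue of the condition appearing in the proof of Theorem 3), while the second, carrying the $4m+16cx^*$ term, is the moderate-versus-extremist indifference condition, so your attribution of the two conditions to the $x_L$- and $\sigma_L$-derivatives appears to be swapped.
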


    First consider the symmetric equilibria in pure strategy. 
    Suppose parties always select a moderate, $t^* = m$. Note that in any pure strategy equilibria it must be that $x^*(t^*) = 0$, as voters will anticipate parties candidate's choice. Then, if a moderate is selected in a pure-strategy equilibrium we have that $x^*(m) = 0$, which implies that the parties have incentive to deviate and select an extremist. In fact, by selecting an extremist a party increases its utility, as it is policy motivated, without harming the probability of winning the electoral competition as voters, which in equilibrium are uninformed, would believe that both parties are running with a moderate.
    
    Now consider the case in which parties always select an extremist,  $t^* = e$. 
    If a party deviates by selecting a moderate and advertising him there are two effects to consider. First, there is a positive effect on the party's expected utility: by advertising a moderate the party increases its probability of winning the elections. As independent voters prefers to vote a moderate candidate, informing them of the deviation does increase the party's probability of winning the election. Second, there is a negative effect on party's expected utility, the advertisement is costly. Thus, if the cost of advertisement is too high $c \geq \overline{c}(k, \beta)$, {\em i.e.} the marginal cost of the deviation (the cost of advertisement) is higher than its marginal benefit (the increase in the probability of winning the election), parties always select an extremist candidate and do not advertise him. 
    
    
    When the advertisement technology is efficient, choosing a moderate increases the probability of winning the election and compensate the cost of advertisement. But, as parties are policy motivated, they would still prefer to win with an extremist candidate rather than with a moderate.\footnote{This also implies that, ceteris paribus, as more centrist is the moderate candidate's position the less often a moderate candidate is selected by the party.} 
    Then, whenever the cost of the advertisement is not too high parties randomly select a candidate type and advertise only the moderate one.

    	\section{Conclusion}
    	
    Recent technological changes have lead to the rapid development of communication networks through social media, and its political implications are far from being completely understood. This paper poses the basis to understand how social interaction among voters with heterogeneous preferences influence the information diffusion and shapes the political outcome.  
    I build a formal model that embeds \emph{(i)} a model of campaign competition, where parties compete in campaign advertising, and \emph{(ii)} a model of personal influence, where voters, with possibly heterogeneous preferences, can strategically communicate with each other to affect the political outcome. 
 	
    The first result shows that whenever voters can strategically communicate with each other echo chambers arise endogenously, voters share valuable information only with like-minded peers. In other words, a chamber arises if and only if the senders and the receiver are biased toward the same party, {\em i.e.} small ideological distance between senders and receivers is not enough to guarantee truthful communication, as one would have expected. Although small ideological distance between the sender and the receiver is necessary for truthful communication it is not sufficient, voters need to be leaning toward the same party to create an echo chamber.
    The second result suggests that both the richness of the network and the degree of homophily play an important role in determining the advertising strategies of the parties. Specifically, whenever one of the two or both are low, the parties are likely to target their advertising campaign to voters ideological biased toward their opponent. This is in sharp contrast with the literature with truthful communication, which suggests that parties seek to target their ideologically closer voters.
    
    From a more general perspective, my results suggest that the role of social networks in political competition should not be underestimated, and neither should the role-played by the voters within their network. Understanding how social networks shape voting sharing strategies is crucial to stem the phenomenon of ``fake news'' and eventually identifying the incentives of the nodes that create and spread them.

    \appendix
    
    \section*{Appendix }
 	This Appendix contains proof of all results stated in the text.
 	
 	 	\paragraph{The indifferent voter.} Given an information set $I$, a voter $i$ is indifferent between voting for the two parties if and only if 
 	\begin{equation}\label{Aindiff}
    u_i(v_i =1  \mid \theta, I, k) =  u_i(v_i=0 \mid \theta, I, k)
    \end{equation}
 	where 
 	\begin{small}
 	\begin{eqnarray}\label{Aind1}
 	 u_i(v_i =1  \mid \theta, I, k) &=&  \sum_{t_{L}, t_{R} \in T}  \rho ( t_{L}, t_{R} \mid I, x, k)  (t_L - i) 
 	\end{eqnarray}
 	\begin{eqnarray}\label{Aind0}
 	  u_i(v_i =0  \mid \theta, I, k) &=& \sum_{t_{L}, t_{R} \in T} \rho ( t_{L}, t_{R} \mid I, x, k)  (i -(1-t_R))  
 	\end{eqnarray}
 	\end{small}
    To ease the exposition I denote the candidate type of party $J$ as $t_J= \{e_J, m_J\}$. 
    Given that $ \sum_{t_{L}, t_{R} \in T}  \rho ( t_{L}, t_{R} \mid \theta, I, k) = 1 $, the  (\ref{Aind1}) becomes
     \begin{small}
 	\begin{eqnarray}
     u_i(v_i =1  \mid I, x, k) &=& \big( \rho(m,m \mid I, x, k)  + \rho(m, e \mid I, x, k)  \big) m_L \notag \\
     &+& \big( \rho(e,m \mid I, x, k)  +\rho(e,e \mid I, x, k)  \big) e_L - i \notag
     \end{eqnarray}
 	\end{small}
    and (\ref{Aind0}) becomes
     \begin{small}
 	\begin{eqnarray}
    u_i(v_i =0  \mid \theta, I, k) &=& - \big( \rho(m,m \mid I, x, k)  + \rho( e,m \mid I, x, k)  \big) m_R \notag \\
 	 &-& \big( \rho(m,e \mid I, x, k)  +\rho(e,e \mid I, x, k) \big)  e_R +i - 1  \notag
    \end{eqnarray}
 	\end{small}
 	Thus, by substituting (\ref{Aind1})  and (\ref{Aind0}) into (\ref{Aindiff}), after some manipulation,
 	 \begin{small}
 	\begin{eqnarray}
    1 - 2 i  &=& \rho(m,m \mid I, x, k) (m_R -m_L) + \rho( e,e \mid I, x, k)  (e_R - e_L) \notag \\ 
    &+&  \rho(m,e \mid I, x, k) (e_R - m_L) + \rho(e, m \mid I, x, k) (m_R - e_L) \notag
    \end{eqnarray}
 	\end{small}
  	from which it is immediate to obtain
    (\ref{indifferent}). 
\hfill $\blacksquare$
\bigskip
 	
 	\paragraph{Sketch of proof of Lemma \ref{lemcandidate}.}  To simplify the argument consider the case in which party $L$'s candidate is an extremist advertised with intensity $ x_{L}(e) > 0 $, and there is only one voter $i$.
    When voter $i$ is informed only about the party $L$'s candidate, {\em i.e. } her information set is $t_L = m$ and $t_R = \emptyset$, she will vote for the left party if and only if her bliss point is such that $ i < \frac{1}{2} - \frac{m}{4} \rho(e,m \mid (e, \emptyset), x) \equiv \underline{i}$. If, instead voter $i$ is uniformed she votes for the party $L$ if and only if her bliss point is such that $ i < \frac{1}{2} \equiv \overline{i} $. Given that $\overline{i} > \underline{i}$, party $L$ by advertising its extremist candidate, not only pays a cost $c>0$, but it also lowers his-own probability of winning the election. 
\hfill $\blacksquare$
\bigskip

 	\paragraph{Sketch of proof of Lemma \ref{alphak}.}  
 	
    An ex-post partisan voter is a $J$-leaning voter who, regardless the information she holds on the candidates' type, always votes for the party she is ex-ante biased for, {\em i.e.} $J$. An ex-post $J$-leaning independent voter is a voter who votes for the opponent party, $J^{-}$, whenever he knows that $J^{-}$'s candidate is a moderate and has no information about the candidate of the party he is ex-ante biased for.  
 	
 	\noindent Consider a right-leaning voter $i \in (1/2, 1)$ and $\rho (m,m \mid \theta, I, x, k )$ be the voter $i$ posterior beliefs that both candidate are moderate given his information set $I$, the parties strategies $x$, and his network $k$. From Definition \ref{D1} follows that a right leaning voter is an ex-post right-leaning independent voter if and only if she votes for the left party whenever his information set $(t_L, M_L(t_L), t_R, M_R(t_R))$ is either $(m, \emptyset, \emptyset, \emptyset)$ or $(\emptyset, m, \emptyset, \emptyset)$, that is the following inequality holds
 	\begin{equation}\label{voterdef}
 	u_i(v_i=1 \mid \theta, I, x, k) \geq u_i (v_i =0 \mid \theta, I, x, k), 
 	\end{equation}
    where $v_i \in \{0,1\}$ is $i$' voting decision and $v_i =1 $ denotes the decision of voting for party $L$.
 	The inequality (\ref{voterdef}) holds whenever 
 	\begin{equation*}
 	i \leq \dfrac{1}{2} + \dfrac{m}{4} \left[ 1 - \rho (m,m \mid \theta, I, x, k)\right] \equiv \alpha_r.
 	\end{equation*}
 	Thus, all voters whose bliss point are such that $i \in (1/2 , \alpha_r)$ are ex-post right-leaning independent voters. Given the symmetry of the game by the same argument it is possible to show that all voters whose bliss point 
 	is such that $i \in (\alpha_l, 1/2 )$ are ex-post left-leaning independent voters, where $ \alpha_l = \dfrac{1}{2} - \dfrac{m}{4} \left[ 1 - \rho (m,m \mid I, x, k)\right] $. 
 	Putting together the above results, a voter is ex-post independent whenever his ideological bliss point $i$ belongs to the interval $ [\alpha_l, \alpha_r]$, and he is an ex-post partisan otherwise.
\hfill $\blacksquare$

\bigskip

    \paragraph{Proof of Lemma \ref{paircheaptalk}.} The proof of point (i) of Lemma 1 is trivial. To prove the remaining two points I divide the proof in two parts. I first analyse the voting decision of a receiver who believes the sender shares only truthful messages. Then, I prove the result (ii) and (iii) of Lemma \ref{paircheaptalk}. 
 	
 	\textbf{Part \emph{(1)}:} \ If the ex-post independent receiver knows his favoured party has a moderate candidate then, regardless of the information he holds on the opponent's candidate, she casts her vote for her favoured party.
 	
 	\noindent An ex-post independent receiver credibly and partially informed only about $L$'s candidate type,{\em i.e.} who holds an information $ I^r = \{ (m, M^s_L(\emptyset), \emptyset, M^s_R(\emptyset)); (m,M^s_L(m), $ $\emptyset, M^s_R(\emptyset)); (\emptyset, M^s_L (m), \emptyset, M^s_R(\emptyset)) \}$, votes for party $L$ if and only if
 	$u_{r}(1 \mid I^{r}, x, s) > u_{r}(0 \mid I^{r}, x, s)$, which holds for $r < \dfrac{1}{2} + \dfrac{m}{4} \left[ 1 - \rho (m,m \mid I^{r}, x, s)\right]$.
 	Mutatis mutandis, an ex-post independent receiver votes for party $R$ when credibly and partially informed only about $R$'s candidate type  if and only if $ r > \dfrac{1}{2} - \dfrac{m}{4} \left[ 1 - \rho (m,m \mid I^{r}, x, s)\right]$.
 	
 	\noindent The uninformed ex-post independent receiver $ I^{r} =(\emptyset, M_L^s( \emptyset),$  $ \emptyset, M_R^s(\emptyset)) $ always casts his vote for her favoured party. 
 	
 	\textbf{Part \emph{(2)}:} \ The sender behaves as if his receiver is pivotal. Then, truthful revelation is incentive compatible if and only if
 	\begin{equation*}
 	u_{s}((M_{L}(t_{L}),M_{R}(t_{R})) \mid I^{s}, x , r) \geq u_{s}(\hat{M}_{L}(t_{L}),\hat{M}_{R}(t_{R})\mid I^{s}, x , r).
 	\end{equation*}
 	
 	\noindent Assume that the sender is matched with an ex-post left-leaning independent receiver \emph{i.e.} $ r \in \left( \frac{1}{2} - \frac{m}{4} \frac{1 - \sigma_{L}}{\sigma_{L} (1-x_{L})^{k+1} + 1 - \sigma_{L}} , \frac{1}{2} \right)$. 
 	
 	\noindent First, consider a perfectly informed sender, $ I^{s} = (m,m) $. The sender's expected payoffs from every possible message pair $M_{L}(t_L), M_{R}(t_R)$ are:
 	\begin{small}
 		\begin{eqnarray*}
 			u_{s}(m, m \mid  I^{s}, x , r) & = &  u_{s}((m, \emptyset) \mid I^{s}, x , r) =  m - s, \\
 			u_{s}(\emptyset, m \mid I^{s}, x , r) & = &  x_{L} (m - s)  (1 - x_{L}) (-1 + m + s), \\
 			u_{s}(\emptyset, \emptyset \mid I^{s}, x , r) & = &  (1 - x_{R}(1 - x_{L})) (m - s) +  ( x_{R} (1 - x_{L})) (- 1 + m + s). 
 		\end{eqnarray*}
 	\end{small}
 	Then, the sender truthfully shares his information if and only if $u_{s}(m, m \mid \theta, I^{s}, x , r)  \geq u_{s}(\emptyset, m \mid \theta, I^{s}, x , r) $
 	and $u_{s}(m, m \mid \theta, I^{s}, x , r)  \geq u_{s}(\emptyset, \emptyset \mid \theta, I^{s}, x , r) $, that hold if and only if $ s \leq  1/2$. A perfectly informed sender truthfully reveal his information if and only if the receiver is biased toward his favoured party, regardless whether the receiver is an ex-post partisan or an ex-post independent voter. Clearly, the same is true whenever the sender is only informed about the party's candidate type he prefers.\\
 	
 	\noindent When the sender holds the information set $I^{s} = (\emptyset , m)$, the sender's payoff from each possible message pair are:
 	\begin{small}
 		\begin{eqnarray*}
 			u_{s}(\emptyset, m \mid  \theta,  I^{s}, x, r )   &=&   \rho (e,m \mid  \theta, I^{s},x, r)   (-1+m+s)  \\
 			& + &  \rho (m,m \mid  \theta, I^{s}, x, r)  [  x_{L} (m - s) + (1 - x_{L}) (-1 + m + s) ] ;  \\
 			u_{s}(m, \emptyset \mid  \theta, I^{s}, x,r)  & = &  u_{s}(m, m \mid I^{s}, x , r)\\
 			& = &  \rho (m,m \mid  \theta, I^{s}, x, r)  (m-s)  + \rho (e,m \mid  \theta, I^{s}, x, r)   (m/2-s); \\
 			u_{s}(\emptyset, \emptyset \mid  \theta, I^{s}, x, r)  & = &   \rho (m,m \mid  \theta, I^{s}, x, r) [ x_{L} (m-s)\\
 			& + & ((1-x_{L})x_{R})(-1+m+s) + (1 - x_{R}(1 - x_{L})) (m - s) ] \\
 			& + &  \rho (e,m  \mid  \theta, I^{s},x, r)  [(1-x_{R}) (m/2-s) + x_{R}(-1+m+s) ].
 		\end{eqnarray*}
 	\end{small}
    The sender never lies by revealing false information about the candidate's type of the left party. When the sender discloses false information about the left candidate $(m, M_R(t_R))$ then the sender knows the receiver uses his message to cast his vote for the left party and with probability $ (1 - \sigma_{L}) $ will take the wrong decision by voting for an extremist candidate. Thus, the only two message pairs to consider are $  ( \emptyset, m )$ and $(  \emptyset, \emptyset )$. It follows that the sender truthfully shares his information if and only if $ u_{s}(\emptyset, \emptyset \mid  \theta, I^{s}, x, r) \geq u_{s}(\emptyset, m \mid I^{s}, x, r)$ which holds if and only if $	s \geq \frac{1}{2} - \frac{m}{4} \frac{1-\sigma_{L}}{\sigma_{L} (1-x_{L})^{2} + 1 - \sigma_{L}}$.
 	
    An ex-post partisan sender informed only about the candidate of his favoured party always truthfully shares his information. On the contrary, if an ex-post partisan sender is informed only about the candidate of the party he likes less, he never truthfully shares his information. This proves Lemma 1 \emph{(ii)}. \\
 	
    \noindent A $J$-leaning ex-post independent sender always truthfully shares his information, regardless the information hold, if also the receiver is $J$-leaning, {\em i.e.} the preferences of the voters in the matched pair are aligned (Lemma \ref{paircheaptalk} (iii)) . 
 	
    \noindent  When the sender has no information about neither of the candidate type of the two party (he is uninformed) his expected payoffs from each messages pair are:
 	\begin{small}
 		\begin{eqnarray*}
 			u_{s}(m, m \mid \theta, I^{s}, x, r)  & = & u_{s}(m, \emptyset \mid \theta, I^{s}, x, r) =  \\
 			& = &  (m - s) (\rho(m,m \mid \theta, I^{s}, x, r) + \rho(m,e \mid \theta, I^{s}, x, r) )   \\
 			& + & (m/2 - s) (\rho(e, m \mid \theta, I^{s}, x, r) + \rho(e,e \mid \theta, I^{s}, x, r) ); \\
 		u_{s}(\emptyset, m \mid \theta, I^{s}, x, r) & = & \rho(m,m \mid \theta, I^{s}, x, r) (x_{L}(m -s) + (1-x_{L}) (-1+m+s) ) \\
 			& + &   \rho(m,e \mid \theta, I^{s}, x, r) (x_{L} (m -s) +  (1-x_{L}) (-1+ m/2 +s) ) \\
 			& + &   \rho(e, m \mid \theta, I^{s}, x, r) (-1+m+s) +  \rho(e,e \mid \theta, I^{s}, x, r) (-1+ m/2 +s);\\
 			u_{s}(\emptyset, \emptyset \mid \theta, I^{s}, x, r) & = & \rho(e,e \mid \theta, I^{s}, x, r)(m/2 - s) + \rho(m,e \mid \theta, I^{s}, x, r) (m-s)\\
 			& + &\rho(m,m \mid \theta, I^{s}, x, r) \left( (1-x_{R}(1-x_{L}) ) (m-s)  +  x_{R}(1-x_{L} ) (-1+m+s) \right) \\
 			& + &  \rho(e,m \mid \theta, I^{s}, x, r)  (x_{R}(-1+m+s) + (1-x_{R}) (m/2 - s)).
 		\end{eqnarray*}
 	\end{small}
 	
 	\noindent The sender truthfully share his information if and only if the following two conditions hold:
 	\begin{itemize}
 	    \item[i)] the sender is not an ex-post partisan voter leaning for the party from which he did not receive any information, $s \geq \frac{1}{2} - \frac{m}{4} \frac{1-\sigma_{L}}{\sigma_{L} (1-x_{L})^{2} + 1 - \sigma_{L}}$,
 	    \item[ii)] the sender's preferences are aligned with his receiver's preferences $ s < 1/2$.
 	\end{itemize}
 \hfill $\blacksquare$
 	\bigskip

 	\paragraph{Proof of Lemma \ref{lemSeq}.} This result follows from sequential rationality and consistency requirements. 
 	\hfill $\blacksquare$

\bigskip

 	\paragraph{Proof of Theorem \ref{Echo}.} \ To show that a senders never deviate from the message strategies described in the two-players game, even with richer the communication network, it is sufficient to show that there is not profitable deviation. Let $ (k-1) $ senders follow the equilibrium strategies of the two-player cheap talk game described before, and  analyse the incentive to deviate of the $k^{th}$ sender, with information $ I^{s} $.\footnote{This is without loss of generality, given that within the sampled network of each voter there is no preference uncertainty and the communication among voters has the form of private messages. Thus, each voter evaluates the messages she receives one by one and can disregard all message pairs received from  senders biased for the party she likes less.} Denote by $\beta \in (0, 1)$ the probability that in each of the $k$ matched pair both voters have aligned preferences, they both prefer the same party. 
 	
 	\noindent To simplify the exposure I analyse the case in which the sender is matched with a left leaning receiver. 
 	
 	\noindent The expected payoffs of a perfectly informed sender, $ I^{s} = (m,m) $, from every messages pair are:
 	\begin{small}
 		\begin{eqnarray*}
 			u_{s}(m, m \mid I^{s}, x, k, \beta) & = & u_{s}(m, \emptyset \mid I^{s}, x, k, \beta)  =  m - s \\
 			u_{s}(\emptyset, m \mid I^{s}, x, k, \beta)  & = & x_{L}^{\beta (k-1) +1} (m - s) + (1 - x_{L})^{\beta (k-1)+1} (-1 + m + s) \\
 			u_{s}(\emptyset, \emptyset \mid I^{s}, x, k, \beta)  & = &  \left((1 - x_{R})(1 - x_{L})\right)^{\beta (k-1)+1} (m - s)\\
 			&+& \left(( x_{R}(1 - x_{L}))\right)^{\beta (k-1)+1} (- 1 + m + s) 
 		\end{eqnarray*}
 	\end{small}
    and a perfectly informed sender sends a truthful message to a left biased receiver if and only if $u_{s}(m, m \mid I^{s}, x, k, \beta) \geq u_{s}(\emptyset, m \mid I^{s}, x, k, \beta)$ and $u_{s}(m, m \mid I^{s}, x, k, \beta)  \geq  u_{s}(\emptyset, \emptyset \mid I^{s}, x, k, \beta)$. The two inequalities hold if and only if $ s \leq 1/2$. 
    A perfectly informed sender has no incentive to deviate from the equilibrium strategy of two-players communication game: he truthfully shares his information whenever the his preferences are aligned with the receiver, regardless of the ex-post group type (partisan/independent) each of them belongs.
 	
 	\noindent The expected payoffs of a sender only informed about the candidate type of the receiver's less favoured party, $I^{s} = (\emptyset , m)$, from each messages pair are:
 	\begin{small}
 		\begin{eqnarray*}
 			u_{s}(\emptyset, m \mid I^{s}, x, k, \beta) & = & \rho(m,m \mid I^{s},  k, \beta) \left(  x_{L}^{\beta (k-1)+1} (m - s) + (1 - x_{L})^{\beta (k-1)+1} (-1 + m + s) \right)  \\
 			& + &  \left(1- \rho(m,m \mid I^{s},  k, \beta) \right) (-1+m+s) \\
 			u_{s}(\emptyset, \emptyset \mid I^{s}, x, k, \beta) & = & \rho(m,m \mid I^{s},  k, \beta) \left( x_{L}^{\beta (k-1)+1} (m-s) + ((1-x_{L}) x_{R})^{\beta (k-1)+1} (-1+m+s) \right. \\
 			& + &	\left. ((1 - x_{R})(1 - x_{L}))^{\beta (k-1)+1} (m/2 - s) \right)  \\
 			& + & (1 - \rho(m,m \mid I^{s},  k, \beta)  ) \left( x_{R}^{\beta (k-1)+1} (-1+m+s) + (1-x_{R})^{\beta (k-1)+1} (m/2-s)\right)
 		\end{eqnarray*}
 	\end{small}
 	A sender truthfully shares his information if and only if $u_{s}(\emptyset, m \mid I^{s}, x ,k, \beta )  \geq u_{s}(\emptyset, \emptyset \mid I^{s}, x ,k, \beta )$, that is if and only if $s \geq \frac{1}{2} - \frac{m}{4} \frac{1-\sigma_{L}}{\sigma_{L} (1-x_{L})^{\beta k + 1 } + 1 - \sigma_{L}} \equiv \overline{s}_l$. Thus, a sender partially informed about the receiver's less favoured candidate never deviates from the equilibrium strategy prescribed in the two-player cheap talk game. 
 	
 	\noindent The expected payoffs of an uniformed sender, $ I^{s} = (\emptyset, \emptyset) $, from each messages pair are:
 	\begin{small}
 		\begin{eqnarray*}
 			u_{s}(m, m  \mid I^{s}, x ,k, \beta ) & = &  u_{s}((m, \emptyset) \mid I^{s}, x ,\theta ) \\
 			& = &  (\rho(m, m \mid I^{s},  k, \beta) + \rho(m, e \mid I^{s},  k, \beta) )  (m - s) \\ 
 			& + & (\rho(e, m \mid I^{s},  k, \beta)+ \rho(e, e \mid I^{s},  k, \beta)) (m/2 - s), 	\\			
 		    u_{s}(\emptyset, m  \mid I^{s}, x ,k, \beta ) & = & \rho(m, m \mid I^{s},  k, \beta) (x_{L}^{\beta (k-1)+1} (m -s) + (1-x_{L})^{\beta (k-1)+1} (-1+m+s) ) \\
 			& + & \rho(m, e \mid I^{s},  k, \beta) \left(x_{L}^{\beta (k-1)+1} (m -s)+(1-x_{L})^{\beta (k-1)+1} (-1+ m/2 +s) \right)\\
 			& + & \rho(e,m\mid I^{s},  k, \beta)  (-1+m+s) + \rho(e,e\mid I^{s},  k, \beta)  (-1+ m/2 +s),\\
 		u_{s}(\emptyset, \emptyset \mid I^{s}, x , k, \beta) & = & \rho(m,m\mid I^{s},  k, \beta)  \left( \left( ((1-x_{L}) (1-x_{R}))^{\beta (k-1)+1} + x_{L}^{\beta (k-1)+1} \right) (m-s) \right.\\
 			& + &  \left.   (x_{R}(1-x_{L}))^{\beta (k-1)+1} (-1+m+s) \right)  \\
 			& + &\rho(e,m\mid I^{s},  k, \beta) \left(x_{R}^{\beta (k-1)+1} (-1+m+s) + (1-x_{R})^{\beta (k-1)+1}(m/2 - s)\right) \\
 			& + & \rho(e,e\mid I^{s},  k, \beta) (m/2 - s) + \rho(m, e \mid I^{s},  k, \beta) (m-s).
 		\end{eqnarray*}
 	\end{small}
    An uniformed sender truthfully share his information rather than lie in favour of the receiver's favourite candidate whenever $s \geq \frac{1}{2} - \frac{m}{4} \frac{1-\sigma_{L}}{\sigma_{L} (1-x_{L})^{\beta k + 1} + 1 - \sigma_{L}}\equiv \overline{s}_l$, and the sender's preferences are aligned with his receiver's preferences $ s < 1/2$.
 	Again, the result of Lemma \ref{paircheaptalk} apply to the $k$-pairwise communication game in each matched pair. 
 	
 	\noindent From sequential rationality and consistency requirements that Lemma \ref{lemSeq} holds also for the k-player cheap talk game. 
 	
 	\noindent Then, if a voter receives at least one credible and informative message about the candidate's type of party $J$, that is $ M_{J}(t_J) = m $, he knows that the $J$'s candidate is a moderate with probability one. Conversely, if the receiver observes $k$ credible but uninformative messages about party $J$ candidate type, $ M_{J}(t_J)=\emptyset $, then he knows that the candidate of party $J$ is a moderate with probability $ \sigma_J (m) (1-x_{J}(m))^{k} / \sigma_J (m) (1-x_{J}(m))^{k} +  1 -\sigma_J (m) < 1$.
 	
 	\noindent Consider a voter whose bliss point is $i > 1/2$, then $i$ is an ex-post independent voter if and only if the following inequality holds
 	\begin{equation*}
 	u_i(v_i=1 \mid \tilde{I}, x, k) \geq u_i (v_i =0 \mid \tilde{I}, x, k), 
 	\end{equation*}
 	where $\tilde{I} = ( t_L, M_L(t_L), t_R, M_R(t_R) )$ is either $( m, M_L(t_L), \emptyset, \emptyset ) $ or $( t_L, m, \emptyset, \emptyset )$ for at least one credible message $M_L(t_L)$ and all $k$ credible messages $M_R(t_R)$, which holds whenever 
 	\begin{displaymath}
 	i \leq \frac{1}{2} + \frac{m}{4} \frac{1-\sigma_{R}}{1-\sigma_{R} + \sigma_{R} (1-x^{*}_{R}(m))^{\beta k + 1}} \equiv q_{r}(k,\beta, x_R).
 	\end{displaymath}
 	By symmetry, a voter biased toward the left party is an ex-post independent voter if
 	\begin{displaymath}
 	i\geq \frac{1}{2} - \frac{m}{4}\frac{1-\sigma_{L}}{1-\sigma_{L} + \sigma_{L} (1-x^{*}_{L}(m))^{\beta k + 1} } \equiv q_{l}(k,\beta, x_L).
 	\end{displaymath}
 	It is immediate to notice that the two cut-offs that define the ideology of ex-post independent voters $(q_{l}(k,\beta, x_L) ,q_{r}(k,\beta, x_R) )$ coincide with the cutoffs $\overline{s}_l$ and $\overline{s}_r$, where $\overline{s}_r = \frac{1}{2} + \frac{m}{4} \frac{1-\sigma_{R}}{1-\sigma_{R} + \sigma_{R} (1-x^{*}_{R}(m))^{\beta k + 1}}$ given the symmetry in the communication game.

 	\noindent 	This prove my result. Voters, whose ideological bliss point is between this two cut-off points truthfully and credibly transmit information whenever their preferences are aligned.
 	\hfill $\blacksquare$

\bigskip

    In what follows, to ease the proof of both Theorems \ref{Random} and \ref{target}, I assume party $L$ has a moderate candidate $t_{L} = m$.

 	\paragraph{Proof of Theorem \ref{Random}.} 
    Parties can only randomly advertise their candidate.   
    If both parties have a moderate candidate, in the symmetric equilibrium, each wins the election with probability $ 1/2 $. If party $R$ has an extremist $t_{R} = e$ what matter is the distribution of indifferent voters for each possible information set, $ I = \{ ( m, m , \emptyset, \emptyset), ( \emptyset, m , \emptyset, \emptyset),$ $  ( m, \emptyset , $  $\emptyset,$ $\emptyset),$  $ ( \emptyset,$  $ \emptyset, $ $ \emptyset, \emptyset)   \} $, hold by voters. 
 	The indifferent voter for each information is 
 	\begin{equation*}
 	i^{*}(I) =  \frac{1}{2} + \frac{1}{2} \displaystyle \left( \sum_{t_{L}, t_{R} \in T}   \rho ( t_{L}, t_{R} \mid \theta, I, x, k, \beta ) \, t_{L} - \rho ( t_{L},t_{R} \mid \theta, I, x,  k, \beta ) \, t_{R} \right)  , 
 	\end{equation*}
 	which implies
 	\begin{displaymath}
 	\pi(m,e \mid x) = \frac{4 +  \rho(m, e \mid I, x, k, \beta) (1 - (1-x_{L})^{\beta k + 1})}{8}.
 	\end{displaymath}
 	Party $L$'s expected utility from randomly advertising a moderate candidate is 
 	\begin{small}
 	\begin{eqnarray*}
 		U_{L}(x_{L} \mid \theta, k, \beta) & = &  \sigma_{R} \Big[\frac{1}{2}(1-2 m)- \frac{2-3m}{2} \Big] \\
 		& + & (1-\sigma_{R}) \Big[ 4 +  \frac{2-3m}{16} \rho(m, e \mid I, x, k, \beta)  (1 - (1-x_{L})^{\beta k + 1})  - (1-m) \Big] - c x_{L}
 	\end{eqnarray*}
 	\end{small} 
 	and the optimal level of advertisement  $ x^{*}_{L}(m) \in (0,1) $ solves
 	\begin{equation}\label{focxmm}
 	(1- \sigma_{R}(m)) (\beta k + 1) \rho (m, e \mid I, \beta, k) (1-x_{L})^{\beta k} = \frac{16 c}{2 -3 m}.
 	\end{equation}
 	if $(1 - \sigma_{R}) (\beta k + 1) \rho(m, e \mid \theta, I, x, k, \beta)  (2 -3 m) / 16 > c $ party $L$ will never randomly advertise a moderate candidate and $ x_{L} = 0 $, otherwise it will advertise a moderate if and only if the expected utility form advertising him  with intensity $x^{*}_{L}$, $U^{\emptyset}_{L}(x_{L}^{*} \mid \theta, k, \beta)$, is higher than the expected utility of not advertising him, $ U^{\emptyset}_{L}(\tilde{x} \mid \theta, k, \beta) $, where $ \tilde{x}_{L}(m) = 0 $.
 	The expected payoff of the left party when she does observe that its candidate is a moderate, and decides to not advertise is
 		\begin{small}
 	\begin{eqnarray*}
 		U^{\emptyset}_{L}(\tilde{x} \mid \theta, k, \beta) & = & \sigma_{R} \Big[ \Big(\frac{1}{2} - \frac{\rho(e,m \mid I, \beta, k) (1- (1-x_{R})^{\beta k + 1})}{8} \Big) (1- 2m) - \frac{2- 3m}{2} \Big] \\
 		& + &  (1-\sigma_{R}) \Big[ \frac{1}{2} \frac{2- 3m}{2} + (1- m) \Big].
 	\end{eqnarray*}
 	 	\end{small} 
 	Thus $ U_{L}(x_{L} \mid \theta, k, \beta) \geq U^{\emptyset}_{L}(\tilde{x} \mid \theta, k, \beta) $ if and only if 
 	\begin{equation}\label{disxm}
 	c x_{L} < \dfrac{ \rho(e , m \mid  I, \beta, k) (2 - 3 m + \sigma_{R} m) (1- (1-x_{L})^{(\beta k + 1)})}{16}.
 	\end{equation}
 	Thus, $x^{*}_{L}(m)$ solves
 	\begin{displaymath}
 	(1- \sigma_{R}(m)) (\beta k + 1) \rho(m, e \mid I, \beta, k) (1-x_{L})^{\beta k} = \frac{16 c}{2 -3 m},
 	\end{displaymath}
 	and party randomly advertises a moderate candidate whenever
 	\begin{displaymath}
 	c< c^{*}(k, \beta) \leq \frac{ (1- \sigma_{R}(m)) (\beta k + 1) \rho(m, e \mid I, \beta, k)  (2 -3 m)}{ 16}
 	\end{displaymath}
 	and never advertises a moderate candidate otherwise. 
\hfill $\blacksquare$

\bigskip

        \paragraph{Proof of Theorem \ref{target} } 
    Each party, when using a targeting technology, can target their advertisement only to the $L$-leaning or $R$-leaning voters. If a party targets its advertisement to $J$-leaning voters then all of them receive the advertisement with probability one while no $J^{-}$-leaning voter receives the advertisement. 
 	
 	\noindent Let $ x_{Jj}^d (t_{J} )=  1 $ denote the advertising strategy of a party $J$ who target ($d$) its advertisement to voters leaning toward party $j = \{ L , R\}  $.
    
    To simplify the argument assume that party $R$ has no access to the targeting technology and it randomly advertises a moderate candidate if $ c < c^{*}(k, \beta) $.

    First I show that parties never target their advertisement to their own supporters. Consider the case in which party $L$ targets its own supporters, then all $L$-leaning voters, both partisans and independents, are informed about the candidate's type of their favoured party and, regardless of the candidate type and advertising strategy of party $R$, party $L$ wins the election with probability $ 1/2 $ it would win if it does not advertise its candidate. 
    
    \noindent It follows that if  $ c > c^{*}(k, \beta) $, which implies party $R$ never advertises a moderate candidate, party $L$ prefers to do not advertise a moderate candidate either, as it will win the electoral competition with probability $1/2$ without paying any advertisement cost.
    If, instead, $ c < c^{*}(k, \beta) $, party $L$ by randomly advertising a moderate candidate wins the electoral competition with probability $ \pi_{L}(m, m \mid x) = 1/2 $ when it faces a moderate opponent, and with probability $ \pi_{L}(m,e \mid x) = 1/2 + \rho(m , e \mid I, \beta, k) (1- (1-x_{L})^{(\beta k + 1)} > 1/2 $ when it faces an extremist opponent. Then, party $L$ target the advertisement to its own supporter rather than randomly advertise its candidate if and only if $U_{L}( x_{LR}^d (m) \mid \theta, \beta , k) \geq U_{L}$ $(x_{L}^* \mid \theta, \beta, k) $, i.e. if and only if
 	\begin{eqnarray*}
 		(1 & - & \sigma_{R}) \Big[ \Big( \frac{  \rho (m,e \mid I, \beta, k) (1-(1-x_{R})^{\beta k + 1})}{8} -\frac{1}{2} \Big) \Big(\frac{2-3m}{2} \Big) \Big] \\
 		& + & \sigma_{R} \Big( \frac{\rho (m,e \mid I, \beta, k) (1- 2 (1-x_{R})^{\beta k + 1}}{8} \Big) (1-2m) > 0,
 	\end{eqnarray*}
 	which never holds for any $ x_{R} \in (0,1)$.
    Thus, in a symmetric pure strategy equilibrium parties never target their advertisements to their own supporters. 
    
    Now consider the case in which party $L$ targets the advertisement to voters leaning for the opponent. 
    
    \noindent If  $ c > c^{*}(k, \beta) $ parties never randomly advertise their candidate. 
    Suppose party $R$ does not advertise its candidate and only party $L$ has access to targeting technology. Party $L$ by targeting its advertisement to the opponent's supporter wins the elections with probability one. It is trivial to show that party $L$ deviate from $x_L(m)=0$ whenever the cost of the advertisement is such that $c^{*}(k, \beta) < c < \frac{1-2m}{2}$, which is the case only when the credible source of voters information are not too many, {\em i.e.} $\beta k + 1 < \frac{16 c}{ (2-3m)(1-\sigma)(1- \rho(m,m \mid x, I, k ,\beta)}$. By symmetry it follows that also party $R$ has an incentive deviate by targeting $L$-leaning voters. This would imply that each party would win the election with probability $1/2$, if both run with a moderate candidate, and with probability one if their opponent has an extremist. It is easy to show that if $c^{*}(k, \beta) < c < \frac{1-2m}{2}$ it does not exist any pure strategy symmetric equilibrium.\footnote{Parties will mix between the two strategy with probability $\zeta = \frac{1-m +2c}{1-2m}$.} 
    
    \noindent If, instead, $ c < c^{*}(k, \beta) $ parties randomly advertise their moderate candidate. Suppose again only party $L$ has access to targeting technology. By targeting the opponent's supporters party $L$ wins the election with probability $ \pi_{L}^{R}(m,e \mid x^{d}_{LR}, x_{R}^*) = 1  $ when the opponent runs with an extremist, and with probability $ \pi_{L}(m,m \mid x^{d}_{LR}, x_{R}^*) = 1/2 - ( \rho (e \mid I, \beta, k) (1-2(1- x_{R}))/16 < 1/2 = \pi_{L}(m,m \mid x_{L}^{*}, x_{R}^*) $ if the opponent runs with a moderate. 
    Then, party $L$ targets the advertisement to the $R$-leaning voters  if and only if $U_{L}( x_{LR}^d (m) \mid \theta, \beta, k) \geq U_{L}(x_{L} \mid \theta, \beta , k)$, {\em i.e.} if and only if
 	\begin{eqnarray*}
 		\sigma_{R} \Big[ \frac{1}{2} (1- 2m) - \frac{2-3m}{2} \Big] + (1- \sigma_{R}) \Big[ \frac{2-3m}{2} - (1-m) \Big] - c  \geq \\
 		\sigma_{R} \Big[ \frac{1}{2} (1-2m)- \frac{2-3m}{2} \Big] + (1- \sigma_{R}) \Big[ \frac{1}{2} \frac{2-3m}{2} - (1-m) \Big],
 	\end{eqnarray*}
 	where recall $x^{d}_{LR} = 1$, and which implies 
 	\begin{displaymath}
 	c < \hat{c} \equiv \frac{(2-3m)(1-\sigma_{R})}{ 4 }. 
 	\end{displaymath}

 	The party $R$ also has incentive to target $L$-leaning voters whenever $c < \hat{c}$. This implies that, as in the benchmark model (Section 3) the cost of advertisement must be 
 	\begin{equation}
 	    c < \hat{\bar{c}} \equiv \frac{(2-3m - \sigma m)}{ 4 }.
 	\end{equation}

    Looking at the cost thresholds it is easy to show that $c^*(\beta, k) \geq \hat{\bar{c}}$ whenever voters' personal network is characterized by high degree of homophily and/or the richness of the network, {\em i.e.} if {\small
    $$k \beta \geq \bar{k \beta} \equiv \frac{(2-3m)\left( (2+(1-
\rho(m,m \mid I, \beta, k))\sigma)+(1+\rho(m,m \mid I, \beta, k))\right) -4m \sigma)}{(2-3m)(1-\rho(m,m \mid I, \beta, k))(1- \sigma) }.$$ }
Whenever this is the case the network it acts as an information diffusion device and by randomly advertising their candidate parties can exploit the echo chambers.
    If, instead, voters' personal network is characterized by few sources of credible information $c^*(\beta, k) < \hat{c}$, and  parties have higher incentive in targeting the opponent's supporters as the voters' personal network entails a significant waste of information. 
 	\hfill $\blacksquare$

   \bibliographystyle{plainnat}
    
   \bibliography{echoct}

    \end{document}